\newtheorem*{theorem*}{Theorem}
\newtheorem{theorem}{Theorem}
\newtheorem{corollary}[theorem]{Corollary}
\newtheorem{proposition}[theorem]{Proposition}
\newtheorem{lemma}[theorem]{Lemma}
\newtheorem{conjecture}[theorem]{Conjecture}
\newtheorem{property}[theorem]{Property}
\newcommand{\OP}[2]{\ensuremath{\overrightarrow{\mathcal P}_{\!#1}^{#2}}\xspace}
\newcommand{\UP}[2]{\ensuremath{\mathcal P_{\;\!#1}^{\;\!#2}}\xspace}
\newcommand{\OR}[2]{\ensuremath{\overrightarrow{\mathcal R}_{#1}^{#2}}\xspace}
\newcommand{\UR}[2]{\ensuremath{\mathcal R_{#1}^{#2}}\xspace}
\newcommand{\FS}[1]{\ensuremath{{\mathcal F}(#1)}\xspace}
\begin{document}

\title{Universal  Systems of Oblivious Mobile Robots\thanks{This work has been supported in part by 
 the Natural Sciences and Engineering Research Council of Canada through the Discovery Grant program; by Prof. Flocchini's University Research Chair; and by the Scientific Grant in Aid by the Ministry of  Education, Culture, Sports, Science, and Technology of Japan.} }
\author{Paola Flocchini$^\dagger$, Nicola Santoro$^\ddagger$, Giovanni Viglietta$^\dagger$, Masafumi Yamashita$^\S$\medskip\\
  \small
  $^\dagger$ University of Ottawa, Canada. {\tt  paola.flocchini@uottawa.ca}, {\tt gvigliet@uottawa.ca} \\
  \small
  $^\ddagger$ Carleton University, Ottawa, Canada.  {\tt santoro@scs.carleton.ca}\\
  \small
  $^\S$ Kyushu University, Fukuoka, Japan. {\tt  mak@csce.kyushu\nobreakdash-u.ac.jp}
} 
\date{}

\index{Flocchini, Paola}
\index{Santoro, Nicola}
\index{Viglietta, Giovanni}
\index{Yamashita, Masafumi}

\maketitle
\thispagestyle{empty}


\begin{abstract}
\noindent An oblivious mobile robot is a stateless computational entity located in a spatial universe, capable of moving in that universe. When activated, the robot observes the universe and the location of the other robots, chooses a destination, and moves there. The computation of the destination is made by executing an algorithm, the same for all robots, whose sole input is the current observation.
No memory of all these actions is retained after the move. When the spatial universe is a graph, distributed computations by oblivious mobile robots have been intensively studied focusing on the conditions for feasibility of basic problems (e.g., gathering, exploration) in specific classes of graphs under different schedulers. In this paper, we embark on a different, more general, type of investigation.

With their movements from vertices to neighboring vertices, the robots make the system transition from one configuration to another. Thus the execution of an algorithm from a given configuration defines in a natural way the computation of a discrete function by the system. Our research interest is to understand which functions are computed by which systems. In this paper we focus on identifying sets of systems that are {\em universal}, in the sense that they can collectively compute all finite functions. We are able to identify several such classes of fully synchronous systems. In particular, among other results, we prove the universality of the set of all graphs with at least one robot, of any set of graphs with at least two robots whose quotient graphs contain arbitrarily long paths, and of any set of graphs with at least three robots and arbitrarily large finite girths.

We then focus on the minimum size that a network must have for the robots to be able to compute all functions on a given finite set. We are able to approximate the minimum size of such a network up to a factor that tends to 2 as $n$ goes to infinity.

The main technique we use in our investigation is the {\em simulation} between algorithms, which in turn defines {\em domination} between systems. If a system dominates another system, then it can compute at least as many functions. The other ingredient is constituted by {\em path} and {\em ring} networks, of which we give a thorough analysis. Indeed, in terms of implicit function computations, they are revealed to be fundamental topologies with important properties. Understanding these properties enables us to extend our results to larger classes of graphs, via simulation.
\end{abstract}

\section{Introduction}\label{s1}
\noindent Consider a network, represented as a finite  graph  $G$,  where the vertices are  unlabeled, and  edge labels are possibly not  unique.
In $G$ operate $k$  {\em oblivious mobile robots} (or simply ``robots''),
that is, indistinguishable  computational entities with no memory,  located at the vertices of the network, and capable of moving
from vertex to neighboring vertex of $G$.
Robots are activated by an adversarial  {\em scheduler} $S$.
Whenever  activated,
a robot  observes the location of the other robots in the graph   (the current {\em configuration}); it computes a destination (a neighboring
vertex or the current location);
and it moves there. The computation of the destination is made by executing  an algorithm, the same for all robots,  whose sole input is the
current configuration.
The current activity terminates after the move, and no memory of the computation is retained; in other words, the entities are {\em stateless}.
The overall system is represented by the  triplet $(G,k,S)$. Notice that, even if the algorithm $A$ the robots execute is deterministic,
its executions may still be non-deterministic. Indeed, since
the network's port numbers may not be unique,  it may be impossible for an algorithm to unambiguously indicate where each
 robot has to move. 
%
This model, introduced by Klasing,  Markou, and Pelc \cite{KlMP08} as an extension of the  model of oblivious robots in continuous spaces (e.g., \cite{FlPS12}),
has been extensively employed and investigated,  focusing
 on
  basic
  problems  in specific
  classes of graphs under different schedulers:  {\em gathering} and {\em scattering} (e.g., 
  \cite{DaDKN16,DaDN14a,DaDN14b,DaDNNS15,ElB11,GuP13,IzIKO10,KaLOT12,KlKN10,KlMP08,MiPST14,OoT15}),
  and  {\em exploration} and {\em traversal}
  (e.g., \cite{BlMPT10,BoMPT11,ChFMS10,DeLPT14,DePT13,FIPS10,FlIPS11,FlIPS13,KoN09,LaGT10}).
  Note that, with the exception of \cite{ChFMS10}, the literature assumes unlabelled edges.
  In this paper, we consider both labelled and unlabelled edges, and focus on the fully synchronous scheduler ${\cal F}$, which simply activates every robot at every turn. We then
 embark on a different, more general, type of investigation.

%

Consider the system $(G,k,{\cal F})$.
Whenever the robots move in the graph according to algorithm $A$,   the system transitions  from the current
configuration to a (possibly) different one.
The obliviousness of the robots implies that always the same (or equivalent) transition occurs   from the same
given configuration.
Consider now the  {\em configuration graph} where there is a directed edge from one configuration to another if
some  algorithm  dictates such a transition.
Then the  execution of  $A$  in  $(G,k,{\cal F})$ from a given configuration is just a walk in this graph from that
configuration. The execution can be viewed in a natural way as the computation of a discrete function $f$ by the system, where $f$ maps a configuration $C$ into the configuration $f(C)$ reached by executing (one step of) $A$ from $C$, defining a subgraph of the configuration graph, called function graph.
 The concept of function computation and  function graph are formally defined in Section~\ref{s2}.

We seek to understand which  functions are computed by which systems. Knowing the structure of such functions gives us information on the robots' behavior as they execute an algorithm, and what tasks the robots can and cannot perform in a network. For instance, if an algorithm computes a function whose graph has no cycles, it means that the robots will eventually be stationary regardless of their initial position; if the function has a unique fixed point, it means that the algorithm solves a \emph{pattern formation} problem. On the other hand, if the function's graph has only cycles of length $p>1$ (possibly with some ``branches'' attached), the robots are collectively implementing a \emph{self-stabilizing clock} of period $p$. If such graphs can be embedded in the configuration graph, then we know that such algorithms exist, and that the corresponding problems are solvable in the system.

In this paper we focus  on identifying sets of systems that are {\em universal}, in the sense that they compute all finite functions. In Section~\ref{s4}, we identify several  classes of universal  fully synchronous  systems.
In particular, among other
results, we prove that

\begin{theorem*}
The following families of systems are universal:
\begin{compactitem}
\item[(a)] $\left\{(G,1,{\cal F})\mid \mbox{$G$ is an unlabeled network}\right\}$,
\item[(b)] $\left\{(G_n,2,{\cal F})\mid \mbox{the quotient graph of $G_n$ contains a sub-path of length at least $n$}\right\}$,
\item[(c)] $\left\{(G_n,3,{\cal F})\mid \mbox{the girth of $G_n$ is at least $n$ and finite}\right\}$.
\end{compactitem}
\end{theorem*}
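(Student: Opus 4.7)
The plan is to prove the three claims separately, each by explicitly embedding an arbitrary target function $f: X \to X$ into the function graph of some system in the prescribed family, and to leverage the simulation/domination machinery together with the structural properties of paths and rings that are developed earlier in the paper. In each case the work reduces to (i) showing that a canonical \emph{substrate} system (a graph with rigid gadgets for a single robot, a two-robot path, or a three-robot ring) is itself universal, and (ii) observing that every system in the family of interest dominates, via simulation, a substrate of sufficient size.

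For part (a) I would, given a finite function $f:X\to X$, build a specific unlabeled network $G_f$ whose vertex set is $X$ together with a family of pairwise non-isomorphic rigid identifier gadgets (for instance, pendant paths or trees of distinct shapes) attached one per element of $X$. The gadgets guarantee that every vertex of $G_f$ has a distinct local view, so any deterministic rule can unambiguously select an outgoing edge from each configuration. The edge set also contains $\{x,f(x)\}$ for every $x\in X$, and the algorithm is defined to move the robot from each $x\in X$ toward $f(x)$ (identifiable thanks to rigidity) and to funnel each gadget vertex back toward its anchor in $X$. The resulting function graph contains the function graph of $f$ as an induced subgraph, which is the desired notion of computation.

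For part (b) the plan is to first prove that a two-robot path is universal once it is long enough, and then to transfer universality by simulation. With two robots on a path of $n+1$ vertices, the configuration graph has $\binom{n+2}{2}$ states arranged in a triangular grid whose local transitions are determined by one of four joint moves. The core step is to show that any function $f$ on an $m$-element set can be encoded, for $n$ sufficiently large in $m$, by selecting $m$ target configurations in this grid and routing them appropriately; the asymmetry between the two endpoints of the path, combined with the asymmetric placement of the two robots, supplies enough symmetry breaking to pin down each target configuration. The hypothesis that the quotient graph of $G_n$ contains a sub-path of length at least $n$ then yields, via the simulation results, a system that dominates a two-robot path of length $n$, and hence inherits universality as $n$ grows.

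Part (c) follows the same template but with rings in place of paths and three robots in place of two. Given a finite function $f$, I would pick $n$ large enough that a cycle of length $n$ admits many mutually non-equivalent three-robot configurations; I would then show that an algorithm on such a cycle can realise $f$'s function graph as an induced subgraph, relying on the ring analysis from the earlier section. The third robot is essential because the dihedral symmetry of a cycle is not broken by two robots in general, whereas three robots placed at pairwise unequal arc distances break all rotations and reflections. Finally, any $G_n$ of girth at least $n$ contains by definition a simple cycle of length $n$, and the simulation/domination framework lifts universality from that cycle to $G_n$. I expect part (c) to be the main obstacle: one must carefully control the residual automorphisms of three-robot configurations on cycles of girth exactly $n$ and verify that the host graph of the same girth does not introduce spurious transitions into the simulated function graph. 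This I would handle by a case analysis of three-robot orbits on cycles, isolating the configurations with trivial stabiliser and restricting the embedding of $f$ to those.
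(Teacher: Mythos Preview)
Your plans for (a) and (b) are essentially the paper's approach. For (a), the paper uses $K_n$ with $i$ pendant leaves on the $i$-th vertex, which is exactly your rigid-gadget idea. For (b), the paper also first shows that two robots on a long oriented path are universal (Lemma~\ref{l:draw}) and then simulates such a path inside $G_n$ via the sub-path of $G_n^\ast$. Your ``routing'' step is vaguer than the paper's actual mechanism---reduce the function graph to maximum degree~3, take a planar orthogonal drawing on an $n\times n$ grid, and embed that grid in $\mathcal G'(\OP{2n}2)$---but the outline is right.

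Part (c) has a real gap. You intend to (i) show that three robots on a cycle of length $n$ are universal, then (ii) lift this to any $G_n$ of girth $\geq n$ because ``$G_n$ contains a cycle of length $n$'' and ``the simulation/domination framework lifts universality from that cycle to $G_n$.'' Step (ii) is not available: nothing in the paper's framework says that $\UR{m}{3}\preceq\FS{G,3}$ whenever $G$ contains an $m$-cycle, and in fact this is the hard part. A vertex on a shortest cycle of $G_n$ may have neighbours off the cycle, and there may be many shortest cycles through the three robots; a robot that must ``take one step along the cycle'' cannot in general single out the correct edge, even when the global configuration has trivial stabiliser. Your proposed fix---restrict to three-robot configurations with trivial stabiliser---addresses automorphisms of the \emph{ring}, not the ambiguity of edges in the host graph $G_n$, so it does not close the gap.

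The paper's argument for (c) is genuinely different from your plan: it never reduces to a pure ring system. Instead it simulates the \emph{sequential} two-robot path algorithm of Lemma~\ref{l:draw} directly inside $G_{8n}$, using a third robot as a stationary anchor on a shortest cycle $L$. The key invariant is not ``the robots lie on $L$,'' but ``there is a \emph{unique shortest path} in $G_{8n}$ containing all three robots,'' which follows because the robots span at most $4n-1$ vertices while the girth is at least $8n$. Moves that stay inside this unique path are deterministic; the one move that must leave it (robot $c$ stepping outward) is allowed to be non-deterministic, landing on \emph{some} shortest cycle through the current path, and the simulation map is defined so that all such outcomes encode the same configuration of $\OP{2n}2$. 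This unique-shortest-path idea is the missing ingredient in your plan.
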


In Section~\ref{s5}, we focus on computing  discrete functions using the smallest possible networks,  perhaps at the cost of employing a large numbers of robots.
In particular, for a given finite set $X$, we study the minimum size that a network must have for the robots to be able
to compute all functions from $X$ to $X$.
We are able to approximate the minimum size of such a network
up to a factor that tends to 2 as $n$ goes to infinity.

The main tool we use in our investigation  is the  {\em simulation} between algorithms, which
in turn defines {\em domination}  between systems.
If  system $\Psi$ dominates system $\Psi'$, then  $\Psi$ computes at least all the functions computed by $\Psi'$.
The other tool is constituted by the {\em path} and {\em ring} graphs (Section~\ref{s3}). These are the main ingredients of all our stronger results, because rings and paths are fundamental topologies with important properties that can be extended to other graphs via simulation.

\section{Definitions}\label{s2}

\noindent In this section we introduce the models of mobile robots that we are going to study. Informally, we consider networks with port numbers, which are represented as graphs where each vertex has a label on each outgoing edge. Port numbers are not required to be unique, which allows us to model anonymous networks with unlabeled edges, as well.

On a network we may place any number of robots, which are indistinguishable mobile entities with no memory. At all times, each robot must be located at a vertex of the network, and any number of robots may occupy the same vertex. All robots follow the same algorithm, which takes as input the network and the robots' positions, and tells each robot to which adjacent vertex it has to move next (or it may tell it to stay still). Time is discretized, and we assume that robots can move to adjacent vertices instantaneously.

Even if algorithms are deterministic, their executions may still be non-deterministic. This is partly because the network's port numbers may not be unique, and therefore it may be impossible for an algorithm to unambiguously indicate where each robot has to move. Another potential source of non-determinism is the scheduler, which is an adversary that decides which robots are going to be activated next. In this paper we will focus on the fully synchronous scheduler, which simply activates every robot at every turn. We will also briefly discuss the semi-synchronous scheduler in Section~\ref{s6}.

\paragraph{Labeled Graphs.}
A \emph{labeled graph} is a triplet $G=(V,E,\ell)$, where $(V,E)$ is an undirected graph called the \emph{base graph}, and $\ell$ is a function that maps each ordered pair $(u,v)$, such that $\{u,v\}\in E$, to a non-negative integer called \emph{label}. A labeled graph is also referred to as a \emph{network}. A network is \emph{unlabeled} if all its labels are equal.

An \emph{automorphism} of a labeled graph $G=(V,E,\ell)$ is a permutation $\alpha$ of $V$ preserving adjacencies and labels, i.e., for all $u,v\in V$, if $\{u,v\}\in E$, then $\{\alpha(u),\alpha(v)\}\in E$ and $\ell(u,v)=\ell(\alpha(u),\alpha(v))$. If there exists an automorphism that maps a vertex $u$ to a vertex $v$, then $u$ and $v$ are \emph{equivalent vertices} in $G$. The \emph{quotient graph} $G^\ast$ is the labeled graph $G$ obtained by identifying equivalent vertices, and preserving adjacencies and labels.

\paragraph{Configuration Spaces.}
Let $\mathbb N_n=\{0,1,\cdots,n-1\}$, for every $n\geq 1$. An \emph{arrangement} of $k$ robots on a network $G=(V,E,\ell)$ is a mapping from $\mathbb N_k$ to $V$. An arrangement specifies the locations of $k$ \emph{distinguishable} robots on a network whose vertices are all \emph{distinguishable}. However, we ultimately intend to model \emph{identical} robots, which \emph{cannot distinguish} between equivalent vertices of the network, unless such vertices are occupied by different amounts of robots. The following definition serves this purpose: two arrangements $a_1, a_2\colon \mathbb N_k \to V$, are \emph{equivalent} if there exist an automorphism $\alpha\colon V\to V$ and a permutation $\pi$ of $\mathbb N_k$ such that $\alpha\circ a_1 = a_2\circ \pi$.

The \emph{configuration space} $\mathcal C(G,k)$, where $G$ is a network and $k$ is a positive integer, is the quotient of the set of arrangements of $k$ robots on $G$ under the above equivalence relation between arrangements. The elements of the configuration space are called \emph{configurations}.

Say that an arrangement $a$ is equivalent to itself under an automorphism $\alpha$ and a permutation $\pi$, as defined above. Then, whenever $\alpha(v)=v'$ and $\pi(r)=r'$, we say that $v$ and $v'$ are \emph{equivalent vertices} in $a$, and $r$ and $r'$ are \emph{equivalent robots} in $a$. 

A class of \emph{indistinguishable} vertices $U$ (respectively, a class of \emph{indistinguishable} robots $R$) of a configuration $C\in\mathcal C(G,k)$ is a mapping from each arrangement $a\in C$ to an equivalence class of vertices $U_a$ (respectively, an equivalence class of robots $R_a$) of $a$ such that, for all $a_1,a_2\in C$ and all automorphisms $\alpha$ and permutations $\pi$ under which $a_1$ and $a_2$ are equivalent, $\alpha(U_{a_1})=U_{a_2}$ (respectively, $\pi(R_{a_1})=R_{a_2}$).

\paragraph{Configuration Graphs.}
While the configuration space contains all the configurations that are distinguishable, either by the base graph's topology, or by the labels, or by the robots' positions, the \emph{configuration graph} specifies which configurations can reach which other configurations ``in one step''. Of course, this depends on a notion of algorithm, and on a notion of scheduler.

An \emph{algorithm} for $k$ robots on a network $G$ is a function that maps a pair $(C,U)$ into a set $U'$, where $C\in\mathcal C(G,k)$ (describing the network's configuration at the moment the algorithm is executed), and $U$ and $U'$ are classes of indistinguishable vertices of $C$ (indicating the executing robot's location and its destination, respectively) such that, for every arrangement $a\in C$ and every vertex $u\in U(a)$, there exists a vertex $u'\in U'(a)$ such that either $u=u'$ or $u'$ is adjacent to $u$. According to this definition, a robot can only specify its destination as a class of indistinguishable vertices, representing either a null movement or a movement to some adjacent vertex.

An \emph{execution} for $k$ robots in a network $G$ is a sequence of configurations of $\mathcal C(G,k)$. A \emph{scheduler} for $k$ robots in a network $G$ is a binary relation between algorithms and executions. The \emph{possible} executions of an algorithm under some scheduler are the executions that correspond to the algorithm under the relation specified by such a scheduler. A \emph{system of oblivious mobile robots} is a triplet $\Psi=(G,k,S)$, where $G$ is a labeled graph, $k\geq 1$, and $S$ is a scheduler for $k$ robots in $G$.

The \emph{configuration graph} $\mathcal G(\Psi)=(\mathcal C(G,k),\mathcal E(\Psi))$, where $\Psi=(G,k,S)$ is a system of oblivious mobile robots, is a directed graph on the configuration space $\mathcal C(G,k)$, where $(C,C')\in\mathcal E(G,k)$ if there is an algorithm $A$ and a possible execution $E=(C_i)_{i\geq 0}$ of $A$ under $S$, such that there exists an index $i$ satisfying $C=C_i$ and $C'=C_{i+1}$.

The \emph{deterministic configuration graph} $\mathcal G'(\Psi)=(\mathcal C(G,k),\mathcal E'(\Psi))$, where $\Psi=(G,k,S)$ is a system of oblivious mobile robots, is a directed graph on the configuration space $\mathcal C(G,k)$, where $(C,C')\in\mathcal E'(G,k)$ if there is an algorithm $A$ such that, for all possible executions $E=(C_i)_{i\geq 0}$ of $A$ under $S$, and for every index $i$ satisfying $C=C_i$, we have $C'=C_{i+1}$.

Intuitively, $\mathcal G'(\Psi)$ is a subgraph of $\mathcal G(\Psi)$ whose edges represent moves that can be deterministically done by the robots, i.e., on which all the scheduler's choices yield the same result. If $\mathcal G(\Psi)=\mathcal G'(\Psi)$, then $\Psi$ is said to be a \emph{deterministic} system.

\paragraph{Fully Synchronous Scheduler.} Given an algorithm $A$ for $k$ robots on a network, we say that a configuration $C'$ \emph{yields} from a configuration $C$ under algorithm $A$ if, for every arrangement $a\in C$ there is an arrangement $a'\in C'$ such that, for every $r\in \mathbb N_k$, either $a(r)=a'(r)$ or $a(r)$ is adjacent to $a'(r)$ and, if $U$ is the class of indistinguishable vertices of $C$ such that $a(r)\in U(a)$, then $a'(r)\in U'(a)$, where $U'=A(C,U)$. The \emph{fully synchronous scheduler} $\mathcal F$ is defined as follows: $(A,E=(C_i)_{i\geq 0})\in \mathcal F$ if, for every $i\geq 0$, $C_{i+1}$ yields from $C_i$. In the rest of the paper, we will write $\FS{G,k}$ instead of $(G,k,\mathcal F)$.

In other words, the fully synchronous scheduler lets every robot move at every turn to the destination it computes. However, if a robot's destination consists of several indistinguishable vertices, the scheduler may arbitrarily decide to move the robot to any of those vertices, provided that it can be reached in at most one hop. All these choices are made by the scheduler at each turn and for each robot, independently.

\paragraph{Simulating Algorithms.} To define the concept of simulation, we preliminarily define a relation on executions. Given an execution $E=(C_i)_{i\geq 0}$ for $k$ robots on a network $G$, an execution $E'=(C'_i)_{i\geq 0}$ for $k'$ robots on a network $G'$, and a surjective partial function $\varphi\colon \mathcal C(G,k)\to \mathcal C(G',k')$, we say that $E$ is \emph{compliant} with $E'$ under $\varphi$ if either $\varphi$ is undefined on $C_0$, or there exists a weakly increasing surjective function $\sigma\colon\mathbb N\to\mathbb N$ such that, for every $i\in \mathbb N$, $\varphi$ is defined on $C_i$, and $\varphi(C_i)=C'_{\sigma(i)}$.

An algorithm $A$ under system $\Psi$ \emph{simulates} an algorithm $A'$ under system $\Psi'$ if there is a surjective partial function $\varphi\colon \mathcal C(G,k)\to \mathcal C(G',k')$ such that each execution of $A$ under $\Psi$ is compliant under $\varphi$ with at least one execution of $A'$ under $\Psi'$.

In this definition, $\varphi$ ``interprets'' some configurations of the simulating system $\Psi$ as configurations of the simulated system $\Psi'$, in such a way that every configuration of $\Psi'$ is represented by at least one configuration of $\Psi$. Moreover, the definition of compliance ensures that the simulating algorithm $A$ makes configurations transition in a way that agrees with $A'$ under $\varphi$.

\paragraph{Computing functions.} We define the implicit computation of a function as the simulation of a system consisting in a single robot on a network whose shape is given by the function itself.

The network \emph{induced} by a function $f\colon X\to X$ is defined as $\Gamma_f=(X,f,\ell)$, where $\ell: (u,v)\mapsto v$. Hence the base graph of $\Gamma_f$ has edges of the form $(x,f(x))$, and the labeling $\ell$ makes all vertices of $\Gamma_f$ distinguishable from each other. The algorithm $A_f$ \emph{associated} to the function $f$ is the algorithm for one robot on $\Gamma_f$ that always makes the robot move from any vertex $x\in X$ to the vertex $f(x)$.

We say that an algorithm $A$ \emph{computes} a function $f\colon X\to X$ under system $\Psi$ if it simulates the algorithm $A_f$ under $\FS{\Gamma_f,1}$.

What this definition intuitively means is that each element of $X$ is represented by a set of robot configurations; an algorithm computes $f$ if any execution from a configuration representing $x\in X$ eventually yields a configuration representing $f(x)$ without passing through configurations that represent other elements of $X$ (or that represent no element of $X$).


If an algorithm under system $\Psi$ computes a function $f$ (respectively, simulates an algorithm $A'$), then we say that $\Psi$ computes $f$ (respectively, simulates $A'$). Moreover, a system $\Psi$ \emph{dominates} $\Psi'$ if every algorithm under $\Psi'$ is simulated by some algorithm under $\Psi$.

We use the notation $X\preceq Y$ to indicate all the concepts defined above: $X$ may be a function computed by an algorithm $Y$ (under some system), or it can be an algorithm simulated by $Y$, or a system dominated by a system $Y$, etc.

%
%

\section{Basic Results}\label{s3}

\begin{proposition}
The relation $\preceq$ is transitive.
\end{proposition}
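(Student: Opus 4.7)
The plan is to reduce every instance of $\preceq$ to transitivity of algorithm-level simulation, and then to establish the latter by composing the witnessing partial functions. Concretely, suppose $A$ under $\Psi$ simulates $A'$ under $\Psi'$ via a surjective partial function $\varphi\colon\mathcal C(G,k)\to\mathcal C(G',k')$, and $A'$ under $\Psi'$ simulates $A''$ under $\Psi''$ via $\varphi'\colon\mathcal C(G',k')\to\mathcal C(G'',k'')$. My first step is to check that $\psi=\varphi'\circ\varphi$ is again a surjective partial function: its domain consists of those $C$ on which $\varphi$ is defined and such that $\varphi(C)$ lies in the domain of $\varphi'$, and surjectivity is immediate from surjectivity of the two factors.

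Second, I would pick an arbitrary execution $E=(C_i)_{i\geq 0}$ of $A$ under $\Psi$ and produce a compliant execution of $A''$ via $\psi$. Using the first simulation, choose an execution $E'=(C'_j)_{j\geq 0}$ of $A'$ compliant with $E$ under $\varphi$; using the second, choose an execution $E''=(C''_m)_{m\geq 0}$ of $A''$ compliant with $E'$ under $\varphi'$. Split into three cases. If $\varphi$ is undefined on $C_0$, then $\psi$ is too, and compliance of $E$ with $E''$ holds vacuously. Otherwise there is a weakly increasing surjection $\sigma\colon\mathbb N\to\mathbb N$ with $\varphi(C_i)=C'_{\sigma(i)}$ for all $i$; in particular $\varphi$ is defined on every $C_i$. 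If $\varphi'$ is undefined on $C'_0$, then $\psi$ is undefined on $C_0$ and we are again in the vacuous case. In the remaining subcase, a weakly increasing surjection $\sigma'$ witnesses $\varphi'(C'_j)=C''_{\sigma'(j)}$ for all $j$, hence $\psi(C_i)=C''_{(\sigma'\circ\sigma)(i)}$ is defined for every $i$, and $\sigma'\circ\sigma$ is still a weakly increasing surjection of $\mathbb N$ onto itself. This gives the desired compliance and completes transitivity of algorithm-level simulation.

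Third, I would observe that every other instance of $\preceq$ unfolds into algorithm-level simulation. Function computation is by definition a simulation of the associated algorithm $A_f$ under $\FS{\Gamma_f,1}$; a system $\Psi$ simulates $A'$ (computes $f$) precisely when some algorithm under $\Psi$ does; and $\Psi$ dominates $\Psi'$ precisely when every algorithm under $\Psi'$ is simulated by some algorithm under $\Psi$. Chains such as $f\preceq A\preceq\Psi$, $A''\preceq\Psi\preceq\Psi'$, or $\Psi''\preceq\Psi'\preceq\Psi$ are then handled by choosing witnessing algorithms at each link and invoking transitivity of simulation; the existential and universal quantifiers line up in the obvious way, so no new content is needed.

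The main obstacle is purely a bookkeeping one: the definition of compliance has an escape clause when $\varphi$ is undefined on $C_0$, and one must ensure that a weakly increasing surjection of $\mathbb N$ onto itself composes with another such map to yield a weakly increasing surjection. The former is handled by the case split above, and the latter is elementary. Nothing substantively deeper is required.
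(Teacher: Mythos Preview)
Your proof is correct and follows essentially the same route as the paper: compose the two witnessing surjective partial functions, handle the vacuous case where the composite is undefined on $C_0$, and otherwise compose the two weakly increasing surjections $\sigma'\circ\sigma$ to witness compliance; the reduction of the other variants of $\preceq$ to algorithm-level simulation is likewise what the paper does. The only cosmetic differences are that you split the vacuous case into two sub-cases (one for each factor) whereas the paper treats undefinedness of the composite at $C_0$ in one stroke, and that your phrase ``choose an execution $E'$ of $A'$ compliant with $E$'' has the direction reversed relative to the definition---though your formula $\varphi(C_i)=C'_{\sigma(i)}$ is the correct one, so this is purely a wording slip.
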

\begin{proof}
Let us prove that, if $A''\preceq A'$ and $A'\preceq A$, then $A''\preceq A$, where $A$ is an algorithm under system $\Psi=(G,k,S)$, $A'$ is an algorithm under system $\Psi'=(G',k',S')$, and $A''$ is an algorithm under system $\Psi''=(G'',k'',S'')$. All other cases (i.e., those involving functions or systems instead of only algorithms) trivially follow from this basic case.

Let the surjective partial function $\varphi\colon\mathcal C(G,k)\to \mathcal C(G',k')$ express the fact that $A$ simulates $A'$, and let the surjective partial function $\varphi'\colon\mathcal C(G',k')\to \mathcal C(G'',k'')$ express the fact that $A'$ simulates $A''$. Let us show that $\mu=\varphi'\circ\varphi$, which is a surjective partial function from $\mathcal C(G,k)$ to $\mathcal C(G'',k'')$, expresses the fact that $A$ simulates $A''$.

Let $E=(C_i)_{i\geq 0}$ be an execution of $A$ under $\Psi$. If $\mu$ is undefined on $C_0$, then $E$ complies with any execution of $A''$, by definition. So, let us assume that $\mu$ is defined on $C_0$, and therefore $\varphi$ is defined on $C_0$, and $\varphi'$ is defined on $\varphi(C_0)$. This implies that, for all $i\geq 0$, $\varphi$ is defined on $C_i$, by definition of simulation. Moreover, $E$ is compliant under $\varphi$ with some execution $E'=(C'_i)_{i\geq 0}$ of $A'$ under $\Psi'$. So there is a weakly increasing surjective function $\sigma\colon\mathbb N\to\mathbb N$ such that $\varphi(C_i)=C'_{\sigma(i)}$.

In particular, $\varphi(C_0)=C'_{\sigma(0)}=C'_0$. But we know that $\varphi'$ is defined on $\varphi(C_0)$, and hence on $C'_0$. By definition of computation, $\varphi'$ is therefore defined on each $C'_i$, with $i\geq 0$. Moreover, $E'$ is compliant under $\varphi'$ with some execution $E''=(C''_i)_{i\geq 0}$ of $A''$ under $\Psi''$. So there is a weakly increasing surjective function $\sigma'\colon\mathbb N\to\mathbb N$ such that $\varphi'(C'_i)=C''_{\sigma'(i)}$.

Now, let $i\in\mathbb N$ be an index. By the above, $\mu$ is defined on $C_i$, and
$$\mu(C_i)=\varphi'(\varphi(C_i))=\varphi'(C'_{\sigma(i)})=C''_{\sigma'(\sigma(i))}.$$
Since $\sigma'\circ\sigma\colon\mathbb N\to\mathbb N$ is itself a weakly increasing surjective function, this implies that $E$ is compiant with $E''$ under $\mu$, and hence $A$ simulates $A''$.
\end{proof}

\begin{corollary}
If a system $\Psi$ dominates a system $\Psi'$, then all functions computed by $\Psi'$ are also computed by $\Psi$.
\end{corollary}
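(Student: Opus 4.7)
The plan is to derive the corollary almost immediately from the preceding transitivity proposition, unpacking only the definitions needed to chain the simulations together.

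First, I would fix notation: suppose $\Psi$ dominates $\Psi'$, and let $f\colon X\to X$ be any function computed by $\Psi'$. By the definition of ``computes'' given in Section~\ref{s2}, there exists an algorithm $A'$ under $\Psi'$ that simulates the algorithm $A_f$ under $\FS{\Gamma_f,1}$; in the $\preceq$ notation this is $A_f\preceq A'$. By the definition of domination, since $A'$ is an algorithm under $\Psi'$ and $\Psi$ dominates $\Psi'$, there is some algorithm $A$ under $\Psi$ that simulates $A'$, i.e., $A'\preceq A$.

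Next I would invoke the transitivity proposition just established: from $A_f\preceq A'$ and $A'\preceq A$ we obtain $A_f\preceq A$. That is, $A$ is an algorithm under $\Psi$ that simulates $A_f$ under $\FS{\Gamma_f,1}$, which is precisely the definition of $A$ computing $f$ under $\Psi$. Hence $\Psi$ computes $f$, and since $f$ was an arbitrary function computed by $\Psi'$, the corollary follows.

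There is no real obstacle here: the only substantive content is the transitivity of $\preceq$, which has already been proven at the level of algorithms, and the author explicitly remarks that the cases involving systems and functions follow trivially. The only care needed is in correctly reading ``$\Psi$ computes $f$'' as shorthand for ``some algorithm under $\Psi$ computes $f$'', and ``$\Psi$ dominates $\Psi'$'' as a quantification over \emph{every} algorithm under $\Psi'$ — in particular the witness algorithm $A'$ for $f$. With those definitional unpackings in hand, the proof is a two-line chain.
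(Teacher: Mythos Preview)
Your proposal is correct and follows essentially the same approach as the paper: both invoke the transitivity of $\preceq$ to pass from $f\preceq\Psi'$ and $\Psi'\preceq\Psi$ to $f\preceq\Psi$. The only difference is that you unpack the definitions down to the algorithm level (exhibiting $A'$ and $A$ explicitly), whereas the paper applies transitivity directly at the system/function level, relying on its earlier remark that those cases follow trivially from the algorithm case.
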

\begin{proof}
Suppose that $\Psi'\preceq\Psi$. Then, for any function $f$ such that $f\preceq \Psi'$, the transitivity of $\preceq$ implies that $f\preceq \Psi$.
\end{proof}

\subsection{General Graphs}\label{ss10}

\begin{proposition}\label{o1}
For every network $G$, the system $\FS{G,1}$ is deterministic, and its configuration graph is isomorphic to the graph obtained from the quotient graph $G^\ast$ by replacing each unoriented edge $\{u,v\}$ with the two oriented edges $(u,v)$ and $(v,u)$, and adding a self-loop $(v,v)$ to each vertex $v$.
\end{proposition}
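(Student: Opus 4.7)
The plan is to identify $\mathcal C(G,1)$ with $V(G^\ast)$ and then enumerate all transitions permitted under $\mathcal F$. With $k=1$ an arrangement is simply a choice of vertex $a(0)=u\in V$, and two arrangements $a_1,a_2$ are equivalent iff there is $\alpha\in\mathrm{Aut}(G)$ with $\alpha(a_1(0))=a_2(0)$ (the permutation $\pi$ of $\mathbb N_1$ is forced to be the identity); equivalently, iff $a_1(0)$ and $a_2(0)$ lie in a common $\mathrm{Aut}(G)$-orbit. This gives a canonical bijection $C_{\bar u}\leftrightarrow \bar u$ between $\mathcal C(G,1)$ and $V(G^\ast)$.

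Next I would examine the algorithm step at $C_{\bar u}$. Fix a representative $a$ with $a(0)=u$; then the automorphisms stabilizing $a$ are exactly those in $H=\mathrm{Stab}_{\mathrm{Aut}(G)}(u)$, so the classes of indistinguishable vertices of $C_{\bar u}$ at $a$ are precisely the $H$-orbits in $V$. The robot's own class $U(a)$ is the singleton $\{u\}$, and the destination class $U'(a)$ must contain either $u$ or a neighbor of $u$; by $H$-invariance of adjacency, $U'(a)$ either equals $\{u\}$ or consists entirely of neighbors of $u$.

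From this I verify determinism. Under $\mathcal F$ the successor arrangement $a'$ satisfies $a'(0)\in U'(a)$. If $w_1,w_2\in U'(a)$, some $\alpha\in H$ sends $w_1$ to $w_2$; since $\alpha$ fixes $u=a(0)$, the arrangements $0\mapsto w_1$ and $0\mapsto w_2$ are related by $\alpha$ and the identity permutation, hence equivalent, and therefore define the same successor configuration $C'$. Thus $\FS{G,1}$ is deterministic, and from $C_{\bar u}$ the reachable configurations are exactly $C_{\bar u}$ itself (take $U'(a)=\{u\}$) and $C_{\bar v}$ for every neighbor $v$ of $u$ in $G$ (take $U'(a)=H\cdot v$; such an algorithm exists since we are free to assign $U'$ to this single $(C,U)$). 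Since the edges of $G^\ast$ are precisely the $\mathrm{Aut}(G)$-orbits of edges of $G$, this matches the claimed configuration graph: a self-loop at every vertex, plus two opposite directed edges replacing each undirected edge of $G^\ast$.

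The only subtle point is determinism: one must observe that the scheduler's freedom to pick any vertex in the destination orbit $U'(a)$ is absorbed by the stabilizer $H$ of $u$. Everything else is routine translation between arrangements and $\mathrm{Aut}(G)$-orbits.
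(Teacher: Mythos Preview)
Your proof is correct and follows essentially the same approach as the paper's: both identify configurations with $\mathrm{Aut}(G)$-orbits of vertices, argue that the scheduler's non-deterministic choice among indistinguishable destinations collapses to a single successor configuration, and then read off the edges from adjacency in $G^\ast$. Your version is more explicit about the group theory---you name the stabilizer $H=\mathrm{Stab}_{\mathrm{Aut}(G)}(u)$ and identify the indistinguishability classes at a fixed arrangement with $H$-orbits---whereas the paper compresses this into the single sentence ``all non-deterministic choices of a neighboring vertex among equivalent ones yield equivalent configurations.'' One tiny remark: in your determinism step, the clause ``since $\alpha$ fixes $u=a(0)$'' is not what makes $0\mapsto w_1$ and $0\mapsto w_2$ equivalent arrangements; any $\alpha\in\mathrm{Aut}(G)$ with $\alpha(w_1)=w_2$ would do. The role of $\alpha\in H$ is upstream, to guarantee $w_1,w_2$ lie in the same destination class $U'(a)$.
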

\begin{proof}
If there is just one robot in the network, all non-deterministic choices of a neighboring vertex among equivalent ones yield equivalent configurations. This proves that $\FS{G,1}$ is deterministic.

Note that the arrangements of one robot on $G$ consist of all the mappings of the form $0\mapsto v$, with $v$ a vertex of $G$. For each such $v$, let $C_v$ be the configuration corresponding to the arrangement $0\mapsto v$, and let $[v]$ be the equivalence class of $G^\ast$ such that $v\in [v]$. By definition of equivalence, for any two arrangements $0\mapsto u$ and $0\mapsto v$, $C_u=C_v$ if and only if $v$ and $v'$ are equivalent in $G$. This implies that the relation $\varphi$ that maps $C_v$ to $[v]$ for every vertex $v$ is a well-defined bijection between the configuration space $\mathcal C(G,1)$ and the vertex set of $G^\ast$.

Let us prove that $\varphi$ induces a graph isomorphism, if the edges of $G^\ast$ are modified as in the observation's statement. Let $(C_u,C_v)$ be an edge of the configuration graph of $\FS{G,1}$. By definition of fully synchronous scheduler, this holds if and only if there is an algorithm $A_{uv}$ that takes $C_u$ and the class of indistinguishable vertices of $C_u$ containing $u$, and maps it to the class of indistinguishable vertices of $C_u$ containing $v$. Such an algorithm exists if and only if there is a vertex $u'$ indistinguishable from $u$ in $C_u$ and a vertex $v'$ indistinguishable from $v$ in $C_u$ such that either $u'=v'$ or $u'$ and $v'$ are neighbors in $G$. However, since there is a unique robot in the network, $u=u'$, and $v$ and $v'$ have the same distance from $u$. In other words, either $[u]=[v]$, or $u$ and $v$ are non-equivalent neighbors in $G$. In the first case, the edge $(C_u,C_v)$ is correctly mapped into the self-loop on $[u]$ in the modified $G^\ast$. The second case is equivalent to $[u]$ and $[v]$ being adjacent in $G^\ast$, which is true if and only if there is the directed edge $([u],[v])$ in the modified $G^\ast$.
\end{proof}

A fundamental question is whether adding robots to a network allows to compute more functions. We can at least prove that adding robots does not reduce the set of computable functions, provided that the network is not pathologically small.

\begin{theorem}
For all networks $G$ with at least three vertices and all $k\geq 1$, $\FS{G,k+1}\not\preceq \FS{G,k}$.
\end{theorem}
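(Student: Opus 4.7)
The strategy is to exploit the Corollary proved just above: to show that $\FS{G,k+1}\not\preceq\FS{G,k}$ it is enough to exhibit a single function that is computed by $\FS{G,k+1}$ but not by $\FS{G,k}$. By Proposition~\ref{o1} the configuration space $\mathcal C(\Gamma_f,1)$ is, for any $f\colon X\to X$, in bijection with $X$, because the labelling of $\Gamma_f$ distinguishes all of its vertices. Hence any simulation witnessing the computation of $f$ in a system $\FS{G',k'}$ provides a surjective partial map $\varphi\colon\mathcal C(G',k')\to X$, which forces $|X|\leq |\mathcal C(G',k')|$.

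We take $f$ to be the identity on a set $X$ of cardinality $|\mathcal C(G,k+1)|$. Under $\FS{G,k+1}$ this identity is computed trivially: let $\varphi$ be any bijection $\mathcal C(G,k+1)\to X$ and let $A$ be the ``stay still'' algorithm (which is legal because a robot is always permitted to keep its current vertex as its destination). Every execution of $A$ is a constant sequence $(C,C,\ldots)$ which $\varphi$ maps to the unique execution $(\varphi(C),\varphi(C),\ldots)$ of $A_f$ starting at $\varphi(C)$, with the compliance reindexing $\sigma$ taken to be the identity on $\mathbb N$. On the other hand, as soon as $|\mathcal C(G,k)|<|X|$, no surjective partial map $\mathcal C(G,k)\to X$ can exist, so $\FS{G,k}$ cannot compute \emph{any} function on $X$.

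The proof thus reduces to the combinatorial claim
$$|\mathcal C(G,k+1)|>|\mathcal C(G,k)|\qquad\text{for every $G$ with $|V(G)|\geq 3$ and every $k\geq 1$,}$$
and this is the step we expect to require the most care. When $\mathrm{Aut}(G)$ fixes some vertex $v^*$, the map that adjoins one additional robot at $v^*$ is a well-defined injection $\mathcal C(G,k)\hookrightarrow\mathcal C(G,k+1)$, and any configuration in $\mathcal C(G,k+1)$ with no robot at $v^*$ — which exists because $|V(G)|-1\geq 2$ — lies outside its image, giving the strict inequality. The vertex-transitive case is subtler: here the natural tool is Burnside's lemma, writing $|\mathcal C(G,k)|=|\mathrm{Aut}(G)|^{-1}\sum_\alpha F_\alpha(k)$ where $F_\alpha(k)$ counts the $k$-multisets of $V(G)$ fixed by $\alpha$. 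The identity contributes $F_e(k+1)-F_e(k)=\binom{n+k-1}{k+1}$, and one must verify that the possibly negative contributions from non-identity automorphisms, governed by the generating functions $\prod_i(1-z^{\lambda_i(\alpha)})^{-1}$, cannot cancel this positive term. The hypothesis $n\geq 3$ is essential, since for $G=K_2$ the sequence $|\mathcal C(K_2,k)|=1,1,2,2,3,3,\ldots$ does fail to be strictly increasing.
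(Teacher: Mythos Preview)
Your reduction to the cardinality inequality $|\mathcal C(G,k+1)|>|\mathcal C(G,k)|$ is correct and is exactly the route the paper takes (the paper simply asserts ``it suffices to show\ldots''; your identity-function argument spells this out properly).

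The gap is in the combinatorial step. First, your case split is not exhaustive: ``$\mathrm{Aut}(G)$ has a global fixed vertex'' and ``$G$ is vertex-transitive'' do not cover all graphs (any graph whose orbits all have size $\geq 2$ but which has more than one orbit falls in neither case). Second, even in the vertex-transitive case you only sketch the Burnside approach and explicitly leave unverified that the non-identity terms cannot overwhelm the $\binom{n+k-1}{k+1}$ gain from the identity; that verification is the entire content of the claim in this case, so the argument is not complete.

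The paper avoids all of this with a uniform injection that needs no case analysis and no Burnside. Given a configuration $C\in\mathcal C(G,k)$, add one robot to a vertex of maximum occupancy. Whatever choice is made, the resulting configuration has a \emph{unique} vertex of maximum occupancy, and removing one robot from that vertex recovers $C$; hence the map is injective. To see the inequality is strict, note that the configuration with $\lfloor(k+1)/2\rfloor$ robots on one vertex, $\lfloor(k+1)/2\rfloor$ on a second, and the remaining $0$ or $1$ on a third (here the hypothesis $|V|\geq 3$ is used) has at least two vertices achieving the maximum, so it cannot lie in the image. This two-line argument replaces your entire second paragraph.
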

\begin{proof}
It suffices to show that $|\mathcal C(G,k+1)|>|\mathcal C(G,k)|$. For each configuration in $\mathcal C(G,k)$, choose a vertex that contains the largest number of robots, and add one robot to it. This way we obtain $|\mathcal C(G,k)|$ distinct configurations of $\mathcal C(G,k+1)$. We can generate yet another configuration by placing $\lfloor (k+1)/2\rfloor$ robots on a vertex, $\lfloor (k+1)/2\rfloor$ robots on another vertex, and the remainder on a third vertex.
\end{proof}

We can also show that a single robot does not compute more functions than $k\geq 1$ robots, in any network $G$.

\begin{theorem}\label{t:1tok}
For all networks $G$ and all $k\geq 1$, $\FS{G,1}\preceq \FS{G,k}$.
\end{theorem}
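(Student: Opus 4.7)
The plan is to have the $k$ robots collectively emulate a single robot. Given an algorithm $A'$ for one robot on $G$, I construct an algorithm $A$ for $k$ robots on $G$ together with a surjective partial function $\varphi\colon\mathcal{C}(G,k)\to\mathcal{C}(G,1)$ witnessing that $A$ simulates $A'$. By Proposition~\ref{o1}, configurations of $\FS{G,1}$ are in bijection with vertices of the quotient graph $G^\ast$, so $A'$ induces a function $f\colon V(G^\ast)\to V(G^\ast)$ sending each vertex to an equal or adjacent one.

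I will declare $\varphi(D)$ defined precisely when all $k$ robots of $D$ lie on vertices of a single equivalence class of $G$ (equivalently, a single vertex of $G^\ast$), in which case $\varphi(D):=C_v$ for any $v$ in that class. Surjectivity is witnessed by the configurations $D_v$ in which all $k$ robots stack on $v$. The reason I take this wider domain rather than only ``all robots on a single vertex'' is to absorb the scheduler's non-determinism: if $U':=A'(C_v,\{v\})$ comprises several equivalent vertices and the scheduler scatters the $k$ robots across $U'$, every vertex of $U'$ still lies in the single equivalence class of $f(v)$, so $\varphi$ remains defined on every possible successor.

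To define $A$, I treat each class $U$ of indistinguishable vertices of $D$ that contains a robot separately. With $\varphi(D)=C_v$, I pick a representative $u\in U$ and an automorphism $\alpha$ of $G$ with $\alpha(v)=u$, transport the one-robot instruction $U'$ to $\alpha(U')$, choose any $w\in\alpha(U')$, and set $A(D,U)$ to be the class of $w$ in $D$. On configurations outside the domain of $\varphi$, $A$ is defined arbitrarily. Since every chosen destination class lies inside the equivalence class of $f(v)$, any successor configuration $D'$ remains in the domain of $\varphi$ with $\varphi(D')=C_{f(v)}$. Iterating, the $\varphi$-image of any execution of $A$ beginning in the domain of $\varphi$ is step-by-step the execution of $A'$ from the corresponding one-robot configuration, yielding compliance with the identity reparametrization.

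The main obstacle is confirming that $A$ actually returns a legal destination class: for every robot in $U$, the chosen class must contain either the robot's current vertex or a neighbor of it. This follows from the fact that $\mathrm{Aut}(D)\subseteq\mathrm{Aut}(G)$ acts transitively on $U$ and preserves both adjacency and the class structure of $D$. Concretely, if $w\in\alpha(U')$ is adjacent to (or equal to) the representative $u$, then for any other $u'\in U$ there exists $\gamma\in\mathrm{Aut}(D)$ with $\gamma(u)=u'$, and $\gamma(w)$ lies in the same $D$-class as $w$ and is adjacent to (or equal to) $u'$, so the move is legal.
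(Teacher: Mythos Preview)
Your proof is correct and follows essentially the same approach as the paper: restrict $\varphi$ to configurations in which all $k$ robots occupy a single equivalence class of $G$, and have each robot mimic the move dictated by the one-robot algorithm, relying on Proposition~\ref{o1} to guarantee that every such move lands in a well-defined class of $G^\ast$ regardless of the scheduler's choices. The paper's proof is two sentences and leaves implicit exactly the points you spell out---the legality of the destination class and the closure of the domain of $\varphi$ under all possible successor configurations---so your version can be read as a careful unpacking of the same argument rather than a different one.
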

\begin{proof}
In the simulation we use only the configurations of $\FS{G,k}$ in which all robots lie in equivalent vertices of $G$. Then each robot pretends to be the only robot in the network, and makes the move that the unique robot of $\FS{G,1}$ would make. This is a well-defined simulation even if $\FS{G,k}$ is not deterministic, due to Proposition~\ref{o1}.
\end{proof}

We can extend this idea to show that $\FS{G,k}\preceq \FS{G,2k}$, provided that $\FS{G,2k}$ is deterministic.

\begin{theorem}\label{p:corspecial2}
$\FS{G,k}\preceq \FS{G,k'}$, provided that $\FS{G,k'}$ is deterministic and $k'\geq 2k$.
\end{theorem}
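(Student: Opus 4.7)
My plan is to extend the pairing idea of Theorem~\ref{t:1tok}---``all robots at equivalent vertices act as one''---to the richer scheme ``each simulated robot is represented by a pair (or enlarged cluster) of simulating robots sitting on the same vertex''. Focus first on the core case $k'=2k$. Call a configuration $C'\in\mathcal C(G,k')$ \emph{valid} when every vertex of $G$ carries an even number of robots, and define the partial surjective function $\varphi\colon\mathcal C(G,k')\to\mathcal C(G,k)$ on valid configurations by halving each vertex count. A quick preliminary check is that the automorphisms of $G$ preserving $C'$ are exactly those preserving $\varphi(C')$, since both are determined by the multiset of per-vertex counts; hence $C'$ and $\varphi(C')$ carry the same classes of indistinguishable vertices, and any algorithm $A'$ for $\FS{G,k}$ lifts to a simulating algorithm $A$ for $\FS{G,k'}$ by the rule $A(C',U)=A'(\varphi(C'),U)$ on valid configurations (with arbitrary behavior elsewhere).

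The next step is to establish compliance of executions of $A$ with executions of $A'$ under $\varphi$. From a valid $C'$, every vertex $v$ has $2r_v$ robots all directed to the same destination class $U'=A'(\varphi(C'),U)$; if the scheduler sends each such batch to a single vertex of $U'$, the resulting configuration is again valid and equals the doubling of what $A'$ yields from $\varphi(C')$. The principal obstacle, and where the determinism hypothesis is essential, is that the fully synchronous scheduler may nondeterministically split the $2r_v$ robots among several equivalent vertices of $U'$, potentially producing odd vertex counts. Under determinism, however, all scheduler choices must yield the same configuration up to automorphism; a configuration with at least one odd vertex count cannot be automorphism-equivalent to one with all counts even, since their count multisets disagree. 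Because the ``each batch to one vertex of $U'$'' choice is always available and always produces all-even counts (evens sum to evens), determinism forces the unique outcome to be valid, and thus $A$ mirrors $A'$ step by step under $\varphi$.

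For general $k'>2k$, I would extend $\varphi$ by absorbing the extra $k'-2k$ simulating robots into one distinguished cluster, chosen canonically from $\varphi(C')$ so that its count is strictly larger than those of the other clusters; this asymmetric multiplicity makes the distinguished cluster identifiable from $C'$ alone, and the simulating algorithm can track it across steps by following the unique oversized count. The determinism argument above applies verbatim to each move. The last piece of work, which I expect to be the most delicate bookkeeping of the proof, is verifying that the distinguished cluster remains uniquely identifiable whenever $A'$ merges or splits simulated robots at the same vertex; once this case analysis is carried out, $\varphi$ stays well-defined throughout every execution and the simulation goes through.
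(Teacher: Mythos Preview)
Your doubling construction for the case $k'=2k$ is correct and genuinely different from the paper's approach. The paper does not pair robots; instead it declares $C'\in\mathcal C(G,k')$ valid when some vertex carries at least $k'-k+1$ robots (necessarily unique, since $k'-k+1>k'/2$ whenever $k'\ge 2k$) and defines $\varphi$ by deleting $k'-k$ robots from that vertex. The only consequence of determinism it invokes is that co-located robots can never be separated, so the overloaded vertex keeps its $\ge k'-k+1$ robots forever and the invariant is trivially stable. This handles every $k'\ge 2k$ uniformly, with no case analysis; your parity argument, while elegant, buys nothing extra and is specific to $k'=2k$.

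Your proposed extension to $k'>2k$, however, has a real gap: the invariant ``the distinguished cluster has strictly larger count than all others'' is \emph{not} preserved by the simulated dynamics. When two non-distinguished simulated robots merge onto the same vertex, their doubled clusters in $C'$ merge as well, and the combined count can overtake the distinguished one. Concretely, take $k=3$, $k'=7$, and a valid $C'$ with counts $(3,2,2)$ at vertices $(a,b,c)$, the cluster at $a$ being distinguished; if $A'$ sends the $b$- and $c$-robots to a common neighbour $d$ while the $a$-robot stays put, the next $C'$ has counts $(3,4)$ at $(a,d)$, and your rule now points to the wrong vertex. A parity marker would rescue the case where $k'-2k$ is odd, but for even surplus you need yet another device. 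The paper's one-big-cluster encoding sidesteps the whole problem: a vertex holding more than half of all robots can never be overtaken by any merger of the remaining ones.
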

\begin{proof}
The configurations of $\FS{G,k'}$ that we use in our simulation are only those in which there is a (unique) vertex $v$ occupied by at least $k'-k+1$ robots. Each of these configurations is mapped to the configuration of $\FS{G,k}$ that is obtained by removing $k'-k$ robots from $v$. This mapping is surjective. The simulation can be carried out because $\FS{G,k'}$ is deterministic, and therefore robots occupying the same vertex can never be separated, implying that there is always going to be a vertex with at least $k'-k+1$ robots.
\end{proof}

Finally, we conjecture that adding robots increases a system's computational capabilities.

\begin{conjecture}\label{con:1}
For all networks $G$ and all $k\geq 1$, $\FS{G,k}\preceq \FS{G,k+1}$.
\end{conjecture}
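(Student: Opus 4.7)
The approach I would take adapts the technique from Theorem~\ref{p:corspecial2}: use the single extra robot as a distinguishing ``marker'' that sits on top of one of the $k$ simulated robots. Concretely, I would define a surjective partial function $\varphi\colon \mathcal C(G, k+1) \to \mathcal C(G, k)$ whose domain consists of the $(k+1)$-configurations $C'$ in which there is a unique vertex $v^*$ with strictly more robots than any other vertex, and set $\varphi(C')$ to be the $k$-configuration obtained by removing a single robot from $v^*$. Surjectivity is immediate: given any $k$-configuration $C$, pick any vertex achieving the maximum robot count in $C$ and add a single robot to it; the resulting $(k+1)$-configuration has that vertex as its unique maximum and projects back to $C$.

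Given an algorithm $A$ for $\FS{G, k}$, the simulating algorithm $A'$ for $\FS{G, k+1}$ would, on a configuration $C'$ in the domain of $\varphi$, identify $v^*$, compute $\varphi(C')$, apply $A$ to determine the prescribed moves, and translate them back to the robots of $C'$: the two ``marker robots'' stacked at $v^*$ both perform the move that $A$ assigns to the simulated robot at $v^*$, while every other robot mimics its simulated counterpart. The heart of the proof would then consist in showing that for every successor $D$ of $\varphi(C')$ under $A$, there is a successor $C''$ of $C'$ under $A'$ with $\varphi(C'')=D$.

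The main obstacle is verifying that the marker structure survives a single fully synchronous step, and it fails for two separate reasons. First, when both marker robots at $v^*$ target a destination class $[w]$ consisting of several indistinguishable neighbors of $v^*$, the scheduler may independently send the two robots to different vertices of $[w]$, destroying the stack. Second, even if the two marker robots stay together at some $w$, there is no guarantee that $w$ will be the unique maximum of $C''$: other robots prescribed by $A$ may converge on $w$, or a vertex untouched by the move may now tie or surpass the count at $w$.

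To address these issues I would try two orthogonal refinements. The first is to restrict attention to moves whose destination class at $v^*$ is a singleton, leveraging the fact that $v^*$ is already distinguished in $C'$ in the hope of forcing the destination to be distinguished as well. The second is a multi-step simulation, in which one step of $A$ is represented by a bounded sequence of steps of $A'$ that reestablishes the marker (by restacking two robots on a suitable vertex) whenever it is disrupted. Either refinement requires a delicate case analysis of how the automorphism group of $G$ interacts with the marker dynamics under arbitrary scheduler behavior, and this is where I expect the main difficulty to lie; indeed, it is not clear that this style of simulation works for every $G$, which is likely why the statement is left as a conjecture.
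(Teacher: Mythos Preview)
The statement you are addressing is a \emph{conjecture}, not a theorem: the paper does not prove it, and in fact explicitly lists it among the open problems in Section~\ref{s6}. So there is no ``paper's own proof'' to compare your attempt against.

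Your proposal is not a proof either, and to your credit you say so yourself. The two obstacles you identify are genuine and are precisely why the na\"ive marker-based simulation (which works cleanly when $k'\geq 2k$ in Theorem~\ref{p:corspecial2}) breaks down for $k'=k+1$: with only one extra robot there is no slack to guarantee a unique maximum after a step, and in non-deterministic systems the scheduler can split the two marker robots when their destination class has more than one representative adjacent to $v^*$. Neither of your suggested refinements is known to close the gap in full generality; the first fails outright on highly symmetric networks (e.g., unlabeled vertex-transitive graphs), and the second would require the restacking procedure to be oblivious and to terminate in boundedly many steps independent of the scheduler, which is itself an open-ended problem. In short, your write-up is a reasonable summary of why the conjecture is plausible and why it resists the obvious attack, but it is not a proof, and the paper offers none.
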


\subsection{Path Graphs}\label{ss11}

\noindent The first special type of network we consider is the one whose base graph consists of a single \emph{path}. This fundamental configuration will turn out to be of great importance in Sections~\ref{s4} and~\ref{s5}, when studying universal classes of systems. In terms of labeling, we focus on two extreme cases: a labeling that gives a consistent orientation to the whole network (i.e., each vertex in the path has port labels indicating which neighbor is on the ``left'' and which one is on the ``right''), and the anonymous unlabeled network. In the first case we have an \emph{oriented path}, and in the second case we have an \emph{unoriented path}. By $\OP nk$ and $\UP nk$ we denote, respectively, the oriented and the unoriented path with $n$ vertices and $k$ robots, under the fully synchronous scheduler.

\paragraph{Oriented Paths.} Let us study the configuration graph of $\OP nk$. Since the path has an orientation, no two vertices are equivalent. Therefore, by Proposition~\ref{o1}, $\mathcal G\left(\OP n1\right)$ consists of a path of length $n$ with bidirectional edges and a self-loop on each vertex. In general, the configuration space of $\OP nk$ is in bijection with the set of weakly increasing $k$-tuples of integers in $\mathbb N_n$. Hence, for a fixed $k$, the size of the configuration space is
$$\binom{n+k-1}{k}\ \sim\ \frac{n^k}{k!}.$$

If these $k$-tuples are thought of as points of $\mathbb R^k$, they constitute the set of lattice points in the $k$-dimensional simplex whose $k+1$ vertices have the form $(0,0,\cdots,0,1,1,\cdots,1)$. This simplex has $k+1$ facets, two of which correspond to configurations in which the first or the last vertex of the network is occupied by a robot, while the other $k-1$ facets correspond to configurations in which exactly $k-1$ vertices are occupied (i.e., exactly two robots share a vertex).

The edges of the configuration graph (that are not self-loops) connect bidirectionally all pairs of points whose Chebyshev distance is at most 1, with the exception of the points that lie on the aforementioned $k-1$ facets. Indeed, since no algorithm can separate two robots that occupy the same vertex, it follows that those facets (as well as all their intersections) can never be left once they are reached. Figure~\ref{f1}(a) shows the configuration graph of $\OP 52$.

\begin{property}\label{p:grid}
For all $n\geq 1$, $\mathcal G\left(\OP{2n}2\right)$ contains an $n\times n$ grid with bidirectional edges and self-loops.
\end{property}

Since an oriented path gives the robots a sense of direction, an algorithm can unambiguously indicate to which neighbor each robot is supposed to move. Therefore $\mathcal G\left(\OP nk\right)=\mathcal G'\left(\OP nk\right)$.

\begin{property}\label{p:opdet}
For all $n,k\geq 1$, the system $\OP nk$ is deterministic.
\end{property}

\paragraph{Unoriented Paths.} Let us study the configuration graph of $\UP nk$. Since the network in this system is unlabeled, if two vertices are symmetric with respect to the center of the path, they are equivalent. So, the configuration space is in bijection with the set of weakly increasing $k$-tuples of integers in $\mathbb N_n$, where each $k$-tuple $(a_1,\cdots,a_k)$ is identified with its ``symmetric'' one, $(n-a_k-1,\cdots,n-a_1-1)$. Elementary computations reveal that, if $k$ is fixed, these $k$-tuples are
$$\frac 12\cdot \binom{n+k-1}{k} + O\left(n^{\lfloor k/2\rfloor}\right)\ \sim\ \frac{n^k}{2\cdot k!}.$$

Geometrically, the configuration space of $\UP nk$ can be represented as the set of lattice points in a truncated $k$-dimensional simplex, which is obtained by cutting the simplex of $\OP nk$ roughly in half, along a suitable hyperplane. Figures~\ref{f1}(b) and~\ref{f1}(c) show the configuration graphs of $\UP 52$ and $\UP 62$.

If $n$ is even, we have $\mathcal G\left(\UP nk\right)=\mathcal G'\left(\UP nk\right)$, because each robot has a unique closest endpoint of the path, which it can use to specify unambiguously in which direction it intends to move. However, if $n>1$ is odd and $k\geq 2$, the two graphs differ. Indeed, if the configuration is symmetric and the central vertex is occupied by more than one robot, then it is impossible to guarantee that all the central robots will move in the same direction: the adversary will decide how many of these robots go left, and how many go right. For instance, $\mathcal G'\left(\UP 52\right)$ differs from $\mathcal G\left(\UP 52\right)$ in that the vertex in $(2,2)$ has no outgoing edges in $\mathcal G'\left(\UP 52\right)$, because these correspond to non-deterministic moves.

\begin{property}\label{p:updet}
For all $n,k\geq 1$, the system $\UP{2n}k$ is deterministic.
\end{property}

\begin{figure}[!htb]
\centering
\begin{minipage}{0.495\textwidth}
\centering
\begin{tabular}{c@{\qquad}c@{\qquad}c}
\includegraphics[scale=0.75]{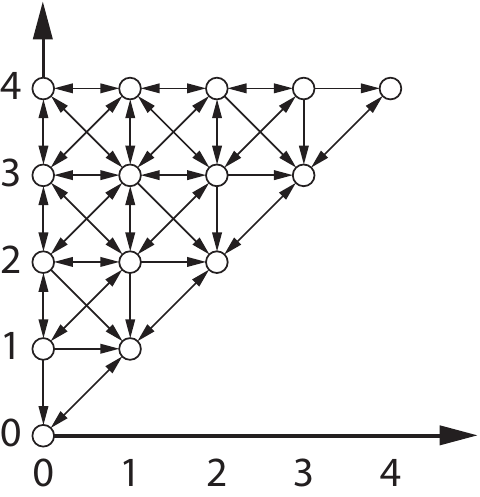} &
\includegraphics[scale=0.75]{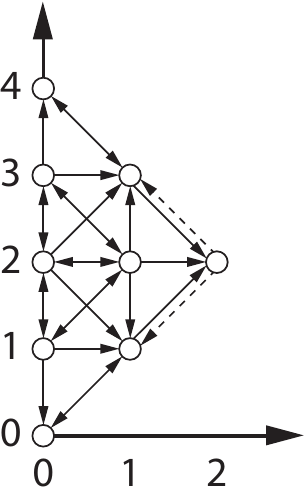} &
\includegraphics[scale=0.75]{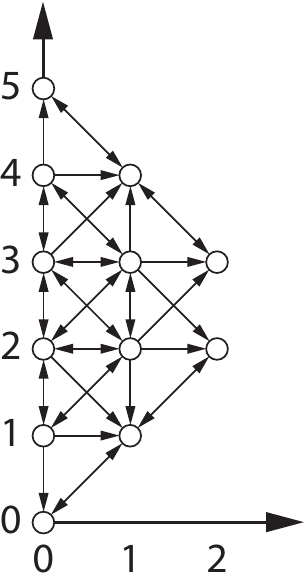} \\
(a) $\mathcal G\left(\OP 52\right)$ & (b) $\mathcal G\left(\UP 52\right)$ & (c) $\mathcal G\left(\UP 62\right)$
\end{tabular}
\end{minipage}
\begin{minipage}{0.495\textwidth}
\vspace{0.2cm}
\centering
\begin{tabular}{c@{\qquad\qquad\quad}c}
& \includegraphics[scale=0.75]{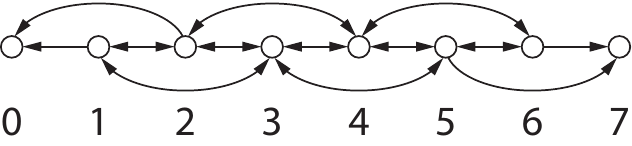} \\
& (d) $\mathcal G\left(\OR{14}2\right)$ \\
& \\
& \includegraphics[scale=0.75]{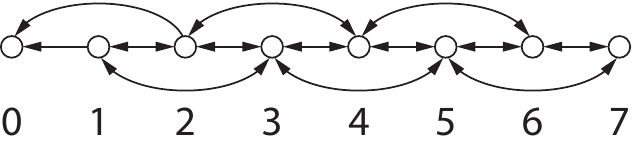} \\
& (e) $\mathcal G\left(\OR{15}2\right)$
\end{tabular}
\end{minipage}
\begin{tabular}{c@{\quad}c@{\quad}c}
& & \\
\includegraphics[scale=0.75]{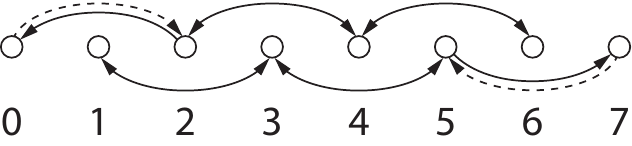} &
\includegraphics[scale=0.75]{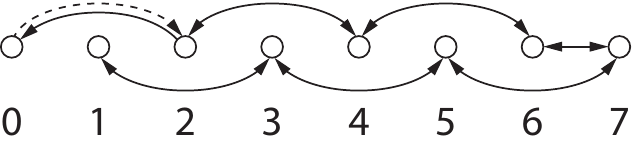} &
\includegraphics[scale=0.75]{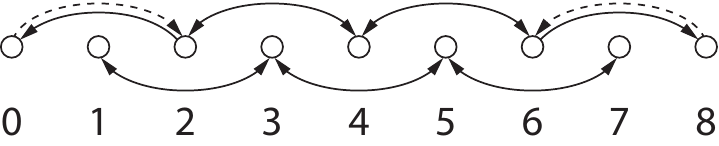} \\
(f) $\mathcal G\left(\UR{14}2\right)$ & (g) $\mathcal G\left(\UR{15}2\right)$ & (h) $\mathcal G\left(\UR{16}2\right)$
\end{tabular}
\caption{Configuration graphs of some oriented and unoriented paths and rings. Dashed arrows represent non-deterministic moves. For clarity, self-loops have been omitted from all vertices.}
\label{f1}
\end{figure}

\subsection{Ring Graphs}\label{ss12}

\noindent Now we consider \emph{ring} networks, which are networks whose base graph is a single cycle. This is another fundamental class of networks, which will have a great importance in Sections~\ref{s4} and~\ref{s5}. Like a path, a ring can be \emph{oriented} if its labeling gives a consistent sense of direction to the robots in the network (i.e., each vertex has port labels indicating which neighbor lies in the ``clockwise'' direction, and which one lies in the ``counterclockwise'' direction), and \emph{unoriented} if the network is unlabeled. Therefore we have the two systems $\OR nk$ and $\UR nk$, denoting, respectively, the oriented and the unoriented ring with $n$ vertices and $k$ robots, under the fully synchronous scheduler.

\paragraph{Oriented Rings.} Let us study the structure of $\mathcal G\left(\OR nk\right)$. Note that, in a ring network, all vertices are equivalent. Therefore, by Proposition~\ref{o1}, $\mathcal G\left(\OR n1\right)$ consists of a single vertex with a self-loop. 

In the case of $k=2$ robots, a configuration is uniquely identified by the distance $d$ of the two robots on the ring, which may be any integer between $0$ and $\lfloor n/2\rfloor$. If $d=0$, the robots are bound to remain on the same vertex. If $d=n/2$ (hence $n$ is even), the robots are located on indistinguishable vertices, and they are bound to remain on indistinguishable vertices. In all other cases, it is possible to distinguish the two robots and move them independently, thanks to the orientation of the ring. Therefore, if $0<d<n/2$, an algorithm may move the two robots independently in any direction, thus adding any integer between $-2$ and $+2$ to $d$ (subject to the $0\leq d\leq\lfloor n/2\rfloor$ constraint). Figures~\ref{f1}(d) and~\ref{f1}(e) show the configuration graphs of $\OR{14}{2}$ and $\OR{15}{2}$.

In general, the configuration space of $\OR nk$ is in bijection with the set of binary \emph{necklaces} of length $n$ and density $k$, i.e., the binary strings having $k$ zeros and $n-k$ ones, taken modulo rotations. To count them, the P\'olya enumeration theorem can be applied, as in~\cite{riordan}. If $k$ is fixed, the configuration space has size
$$\frac 1{n+k}\ \cdot \!\!\!\sum_{d|\gcd(k,n)}\!\!\!\!\!\phi(d)\cdot \binom{(n+k)/d}{k/d}\ \sim\  \frac{n^{k-1}}{k!},$$
where $\phi$ is Euler's totient function. Since the ring is oriented, we have $\mathcal G\left(\OR nk\right)=\mathcal G'\left(\OR nk\right)$.

\begin{property}\label{p:ordet}
For all $n,k\geq 1$, the system $\OR nk$ is deterministic.
\end{property}

\paragraph{Unoriented Rings.} The structure of $\mathcal G\left(\UR nk\right)$ is similar to that of $\mathcal G\left(\OR nk\right)$, except that now two configurations are indistinguishable also if they are ``reflections'' of each other. The case $k=1$ is again trivial and yields a single configuration, but the case $k=2$ is more interesting. As before, a configuration of $\UR n2$ is identified by an integer $d$ with $0\leq d\leq\lfloor n/2\rfloor$, but the two robots are now indistinguishable, and therefore they must always make symmetric moves. Hence $d$ may only change by $-2$ or $+2$; the only exception is when $n$ is odd and $d=\lfloor n/2\rfloor$, which can change to $d=\lfloor n/2\rfloor-1$ (as well as to $d=\lfloor n/2\rfloor-2$), and vice versa. So, if $n$ is odd, $\mathcal G\left(\UR n2\right)$ is isomorphic to $\mathcal G\left(\OP{\lceil n/2\rceil}1\right)$. If $n$ is even, $\mathcal G\left(\UR n2\right)$ consists of two connected components: the one corresponding to even $d$'s is isomorphic to $\mathcal G\left(\OP{\lceil (n+2)/4\rceil}1\right)$, and the one corresponding to odd $d's$ is isomorphic to $\mathcal G\left(\OP{\lfloor (n+2)/4\rfloor}1\right)$. Figures~\ref{f1}(f),~\ref{f1}(g), and~\ref{f1}(h) show the configuration graphs of $\UR{14}{2}$, $\UR{15}{2}$, and $\UR{16}{2}$.

\begin{property}\label{p:urpaths}
For all $n,k\geq 1$, $\mathcal G\left(\UR nk\right)$ consists of either a single path or two disjoint paths.
\end{property}

In general, instead of representing the configurations of $\UR nk$ with necklaces as before, we use \emph{bracelets} of length $n$ and density $k$, i.e., binary strings having $k$ zeros and $n-k$ ones, taken modulo rotations and reflections. The size of the configuration space can be computed again with the P\'olya enumeration theorem, this time using the dihedral group instead of the cyclic group. For fixed $k$, its size is
$$\frac 1{2(n+k)}\ \cdot \!\!\!\sum_{d|\gcd(k,n)}\!\!\!\!\!\phi(d)\cdot \binom{(n+k)/d}{k/d}\: +\: O\left(n^{\lfloor k/2\rfloor}\right)\ \sim\ \frac{n^{k-1}}{2\cdot k!}.$$

With unoriented rings, the deterministic configuration graph is slightly different. If $d=0$ or $d=n/2$, the adversary may choose to keep $d$ unvaried, by making both robots always move in the same direction. Therefore, the configurations corresponding to $d=0$ and $d=n/2$ have no outgoing edges in $\mathcal G'\left(\UR nk\right)$. Other than that, the two graphs are the same.

\subsection{Domination Relations Between Paths and Rings}\label{ss32}

\noindent Next we describe how different systems of paths and rings dominate each other.

%
%
%
%
%

\begin{theorem}
For all $n,k\geq 1$ and $k'\geq 2k$, $\OP nk \preceq \OP n{k'}$, $\UP{2n}k \preceq \UP{2n}{k'}$, and $\OR nk \preceq \OR n{k'}$.
\end{theorem}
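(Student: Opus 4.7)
The plan is to observe that this theorem is essentially a direct corollary of Theorem~\ref{p:corspecial2}, which asserts that $\FS{G,k}\preceq\FS{G,k'}$ whenever $\FS{G,k'}$ is deterministic and $k'\geq 2k$. Since $k'\geq 2k$ is given by assumption in all three cases, the only thing left to verify is the determinism of the simulating system.

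For the three systems in question, determinism has already been recorded as properties in this section: $\OP n{k'}$ is deterministic by Property~\ref{p:opdet}, $\UP{2n}{k'}$ is deterministic by Property~\ref{p:updet}, and $\OR n{k'}$ is deterministic by Property~\ref{p:ordet}. Note that for the unoriented case we crucially need the number of vertices to be even; this is why the statement is restricted to $\UP{2n}{k'}$ rather than $\UP{n}{k'}$, and why an analogous statement for unoriented rings is absent (indeed, $\UR n{k'}$ fails to be deterministic in general, cf.\ the discussion preceding Property~\ref{p:urpaths}).

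So I would simply write: apply Theorem~\ref{p:corspecial2} three times, invoking Property~\ref{p:opdet}, Property~\ref{p:updet}, and Property~\ref{p:ordet} respectively to discharge the determinism hypothesis, and using the hypothesis $k'\geq 2k$ to discharge the other hypothesis. There is no real obstacle; the content has already been done in Theorem~\ref{p:corspecial2}, whose simulation strategy is to use only configurations in which some vertex carries at least $k'-k+1$ robots and to view it as the $k$-robot configuration obtained by deleting $k'-k$ robots from that vertex---the determinism of the simulating system guarantees that those piled-up robots can never be separated, so such a distinguished vertex continues to exist throughout the execution.
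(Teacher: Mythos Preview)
Your proposal is correct and matches the paper's proof essentially verbatim: invoke Properties~\ref{p:opdet}, \ref{p:updet}, and~\ref{p:ordet} to establish determinism of the three simulating systems, then apply Theorem~\ref{p:corspecial2}. Your additional remarks about why the unoriented-path case requires an even number of vertices and why unoriented rings are omitted are accurate and helpful, though the paper's proof does not spell them out.
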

\begin{proof}
By Properties~\ref{p:opdet},~\ref{p:updet}, and~\ref{p:ordet}, the systems $\OP n{k'}$, $\UP{2n}{k'}$, and $\OR n{k'}$ are deterministic. Thus all relations follow from Theorem~\ref{p:corspecial2}. 
\end{proof}

\begin{theorem}\label{t:op2up}
For all $n,k\geq 1$, $\OP nk \preceq \UP{2n}k$.
\end{theorem}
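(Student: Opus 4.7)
The plan is to simulate $\OP nk$ inside $\UP{2n}k$ by confining the robots to the left half of the longer path, identified with the oriented $n$-path. First I would define a partial surjection $\varphi\colon\mathcal C(\UP{2n}k)\to\mathcal C(\OP nk)$ whose domain consists of exactly those configurations admitting a representative arrangement with every robot placed at a vertex in $\{0,1,\ldots,n-1\}$, and which sends such a configuration to the $\OP nk$ configuration with robots at the same positions. Surjectivity is immediate. The crucial property is that, since $k\geq 1$, any configuration in the domain of $\varphi$ is asymmetric: its reflection puts every robot in $\{n,\ldots,2n-1\}$, disjoint from the left half, so the non-trivial automorphism of $\UP{2n}k$ is ruled out and every vertex is individually distinguishable; in particular, each class of indistinguishable vertices is a singleton.

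Given an algorithm $A'$ for $\OP nk$, I would construct a simulating algorithm $A$ for $\UP{2n}k$ as follows. On a configuration $C$ in the domain of $\varphi$, the asymmetry canonically selects the ``left-half representative'' and thus a correspondence between the left half of the $2n$-path and the oriented $n$-path; for each singleton class $\{v\}$ with $v\in\{0,\ldots,n-1\}$, set $A(C,\{v\})=\{v'\}$, where $v'$ is the destination prescribed by $A'$ to the robot at $v$ in $\varphi(C)$. On configurations outside the domain of $\varphi$, the algorithm $A$ may be defined arbitrarily, since compliance there holds vacuously.

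The key verification is that the invariant ``every robot lies in the left half'' is preserved by each move of $A$, so $\varphi$ remains defined throughout the execution and the asymmetry that drives the simulation never breaks. This holds because $A'$ only prescribes moves within $\{0,\ldots,n-1\}$, so every destination of $A$ is again in the left half, and Property~\ref{p:updet} guarantees that $\UP{2n}k$ is deterministic, ruling out any scheduler choice that might push a robot across the midpoint. With this invariant in hand, taking $\sigma=\mathrm{id}$, any execution of $A$ starting in the domain of $\varphi$ is compliant under $\varphi$ with the corresponding execution of $A'$, while executions starting outside the domain satisfy compliance vacuously. Thus $A$ simulates $A'$, giving $\OP nk\preceq\UP{2n}k$. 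I expect the main obstacle to be pinning down precisely how the asymmetry of $C$ canonically singles out the left half, and hence gives each robot a consistent local sense of direction, in a way that matches the formal definition of an algorithm as a function on classes of indistinguishable vertices; once that is nailed down, the rest is routine bookkeeping.
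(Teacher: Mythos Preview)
Your proposal is correct and follows essentially the same approach as the paper: confine all $k$ robots to the first $n$ vertices of the unoriented $2n$-path, so that the non-empty half breaks the reflective symmetry and provides an implicit orientation, making the simulation of $\OP nk$ immediate. The paper's proof is a two-sentence sketch of exactly this idea; you have simply filled in the details (the definition of $\varphi$, the asymmetry argument, invariant preservation, and the appeal to Property~\ref{p:updet}).
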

\begin{proof}
We place the $k$ robots of $\UP{2n}k$ on the first $n$ vertices of the path, leaving the other half empty. This gives an implicit orientation to the path, and now the simulation is trivial.
\end{proof}

%
%
%

\begin{theorem}\label{t:op2or}
For all $n\geq 1$ and $k\geq 2$, $\OP nk \preceq \OR{2n}{k+1}$.
\end{theorem}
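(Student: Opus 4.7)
The plan is to use the extra robot of $\OR{2n}{k+1}$ as a stationary ``marker'' that breaks the rotational symmetry of the ring, carving out a virtual oriented path of $n$ vertices on the arc clockwise from the marker. I will define the partial function $\varphi$ only on \emph{canonical} configurations: those in which some vertex $m$ hosts a single robot, the $n$ vertices $\{m+1, m+2, \ldots, m+n\} \pmod{2n}$ collectively host the other $k$ robots, and the remaining $n-1$ vertices are empty. Such a configuration is mapped to the path configuration in which the $i$-th path vertex carries as many robots as ring vertex $m+1+i$. By rotational symmetry of the ring, every path configuration corresponds to a unique canonical ring configuration, so $\varphi$ is surjective.

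The first technical step is to show that in every canonical configuration the marker is uniquely identifiable, and thus forms a singleton vertex class. The crucial observation is that the marker has the $n-1$ vertices on its counter-clockwise side empty, while any other robot at ring position $m+1+p$ has at most $p \leq n-1$ consecutive empty vertices on its counter-clockwise side before hitting either another path robot or the marker itself. Equality $p = n-1$ forces all $k$ path robots to be stacked on ring vertex $m+n$, and in that extremal case multiplicity distinguishes the marker (a single robot) from the stack (which, since $k \geq 2$, contains at least two). This case analysis is exactly where the hypothesis $k \geq 2$ is essential.

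Given any algorithm $A'$ of $\OP n k$, I will then construct the simulating algorithm $A$ on $\OR{2n}{k+1}$ so that on a canonical configuration the marker is commanded to stay put, while each robot at ring position $m+1+p$ is commanded to move to $m+1+p'$, where $p'$ is the destination prescribed by $A'$ at path vertex $p$. Since $p'$ is always equal or adjacent to $p$ within $\{0, \ldots, n-1\}$, the translated ring move is adjacent to the source, never collides with the marker, and never leaves the virtual path, so the successor configuration is again canonical, with the marker still at $m$ and all path robots confined to $\{m+1, \ldots, m+n\}$. By Properties~\ref{p:opdet} and~\ref{p:ordet}, both systems are deterministic, hence taking $\sigma$ to be the identity makes every execution of $A$ starting from a canonical configuration compliant with the corresponding execution of $A'$ under $\varphi$; executions starting from non-canonical configurations are vacuously compliant. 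The main obstacle is precisely the unique identifiability of the marker in the extremal configurations, because a single extra robot is barely enough to break the ring's rotational symmetry.
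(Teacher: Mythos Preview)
Your proof is correct and follows essentially the same approach as the paper's: a single extra ``marker'' robot plus a run of $n-1$ empty vertices breaks the ring's rotational symmetry and carves out a virtual copy of the oriented $n$-path for the remaining $k$ robots. The only cosmetic difference is that you place the empty run on the counter-clockwise side of the marker whereas the paper places it on the clockwise side; you also spell out explicitly the case analysis showing the marker is uniquely identifiable (including the extremal case where all $k$ path robots coincide), which the paper simply asserts with the remark ``unambiguously identified because $k\geq 2$.''
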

\begin{proof}
We use only the configurations of $\OR{2n}{k+1}$ having a vertex occupied by a single robot, followed in clockwise order by $n-1$ empty vertices. The simulation is performed by the other $k$ robots on the remaining $n$ vertices, which are always unambiguously identified because $k\geq 2$.
\end{proof}
%
%
%
%

\begin{theorem}\label{t:up2ur}
For all $n\geq 1$ and $k\geq 2$, $\UP nk \preceq \UR{3n-1}{k+1}$.
\end{theorem}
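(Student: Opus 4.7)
The plan is to adapt the marker construction used in Theorem~\ref{t:op2or}, but to compensate for the lack of orientation of the ring by making the buffer zones on both sides of the marker symmetric. Specifically, I would take the $k+1$ robots of $\UR{3n-1}{k+1}$ and designate one as a stationary \emph{marker}, placed so that it has $n-1$ empty vertices on each of its sides of the ring. The remaining $k$ robots are confined to the length-$n$ segment diametrically opposite the marker, and they carry out the simulation. The partial map $\varphi\colon\mathcal C(\UR{3n-1}{k+1})\to\mathcal C(\UP nk)$ is defined on exactly those configurations having this shape, by identifying the $n$ segment vertices with the vertices of the simulated path. Note that $1+(n-1)+n+(n-1)=3n-1$, so the counts work out, and the reflection symmetry of the ring through the marker mirrors precisely the reflection symmetry of the unoriented path, so $\varphi$ is well-defined modulo the appropriate equivalences.

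The first thing I would verify is that the marker is unambiguously identifiable at every step of the simulation by a local/global criterion stateable inside any valid configuration, namely: \emph{the unique singleton-occupied vertex with at least $n-1$ empty vertices on each side}. To see uniqueness, observe that a segment robot with $n-1$ empty vertices to its ``left'' must be the leftmost robot in the segment (otherwise another segment robot sits within at most $n-2$ cells to its left); symmetrically for the right. If a singleton segment vertex had $n-1$ empty neighbors on both sides, it would be simultaneously leftmost and rightmost in the segment, forcing it to be the only robot there, which contradicts $k\geq 2$. If instead all segment robots coincide at one vertex, that vertex is a multi-robot vertex and so still distinguishable from the singleton marker. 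This is exactly where the hypothesis $k\geq 2$ is used, and it is the crucial structural fact that everything else rests on.

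Granted marker identification, the simulating algorithm is straightforward: the marker recognizes itself via the criterion above and stays still, while each segment robot recognizes itself as a non-marker, uses the marker to locate itself within the segment (only up to the two-way reflection, which is exactly the ambiguity built into $\UP nk$), and then performs the move dictated by the simulated algorithm of $\UP nk$. Segment robots only move to adjacent segment vertices or stay, so they never leak into the buffers or collide with the marker; hence the next configuration is again valid, and $\varphi$ remains defined along the whole execution with $\sigma(i)=i$. Since moves of $\UP nk$ toward the ``closer end'' of the path correspond bijectively to moves toward the ``closer end'' of the segment, and since the midpoint ambiguity (when $n$ is odd) corresponds exactly to the ring's symmetric choice at the segment's center, the non-deterministic behavior also matches.

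The main obstacle is the marker-uniqueness argument of the second paragraph: it is the only place where the construction can fail, and it is what forces the specific buffer size $n-1$ and the hypothesis $k\geq 2$. Beyond that, surjectivity of $\varphi$ is immediate (any $\UP nk$ configuration is realized by placing the $k$ robots in the corresponding segment positions and the marker at the opposite vertex), and compliance of executions is essentially automatic once one checks that segment moves stay within the segment and that the marker remains a singleton with the correct buffer widths.
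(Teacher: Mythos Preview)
Your proposal is correct and follows exactly the same approach as the paper's proof: a single marker robot flanked by two buffers of $n-1$ empty vertices, with the remaining $k$ robots simulating $\UP nk$ on the residual $n$-vertex arc. The paper states this construction in two sentences and simply asserts that the marker is ``unambiguously identified because $k\geq 2$''; you supply the explicit verification of that uniqueness claim and the check that the reflection through the marker corresponds precisely to the path's reflection, which is welcome detail but not a different idea.
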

\begin{proof}
We use only the configurations of $\UR{3n-1}{k+1}$ having a vertex occupied by one robot, surrounded on both sides by sequences of $n-1$ empty vertices. The simulation is performed by the other $k$ robots on the remaining $n$ vertices, which are always unambiguously identified because $k\geq 2$.
\end{proof}

\section{Universality}\label{s4}

\noindent A system $\Psi$ is \emph{universal for $\mathbb N_n$} if it computes every function on $\mathbb N_n$. In this case, we write $\mathbb N_n\preceq \Psi$. Note that this extension of the relation $\preceq$ preserves its transitivity. A set of systems $\Upsilon$ is \emph{universal} if, for every $n\geq 1$ and every function $f\colon\mathbb N_n\to\mathbb N_n$, there is a system $\Psi\in\Upsilon$ that computes $f$.
%
%


One robot is sufficient for universality, even on unlabeled networks.

\begin{theorem}\label{t:univ1}
$\{\FS{G,1}\mid \mbox{$G$ is an unlabeled network}\}$ is universal.
\end{theorem}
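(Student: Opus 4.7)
The plan is, given any function $f\colon \mathbb N_n \to \mathbb N_n$, to build an unlabeled network $G_f$ whose quotient graph $G_f^\ast$ contains $n$ pairwise inequivalent vertices $v_0,\ldots,v_{n-1}$ with $v_x$ adjacent to $v_{f(x)}$ whenever $f(x)\neq x$. By Proposition~\ref{o1}, the configuration graph of $\FS{G_f,1}$ is precisely $G_f^\ast$ with bidirected edges and a self-loop at every vertex; these $n$ vertices will then play the role of the $n$ configurations of $\FS{\Gamma_f,1}$, and the required edges of $\Gamma_f$ will already be available in the configuration graph.

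Concretely, I would let $G_f$ consist of $n$ ``main'' vertices $v_0,\ldots,v_{n-1}$ joined by the undirected edges $\{v_x,v_{f(x)}\}$ for every $x$ with $f(x)\neq x$, together with a distinctive rigid ``tag'' $T_x$ attached at $v_x$ — for instance, a pendant path of length $N+x$ for some large constant $N$, plus one or two extra pendants of pairwise distinct lengths to force $\deg(v_x)\ge 3$ when needed. Since the $T_x$'s are pairwise non-isomorphic and locally dominate the rest of the graph, any automorphism of $G_f$ must fix every $v_x$. Consequently each $\{v_x\}$ is a singleton equivalence class in $G_f^\ast$, and the adjacencies $v_x\sim v_{f(x)}$ survive the quotient.

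With $G_f$ in hand, I would define the algorithm $A$ by declaring that at $v_x$ the robot's destination is the singleton class $\{v_{f(x)}\}$, letting $A$ do anything it likes at non-main configurations. Take $\varphi\colon\mathcal C(G_f,1)\to\mathbb N_n$ to be the surjective partial function with $\varphi(v_x)=x$, left undefined on tag vertices. Any execution of $A$ starting from $v_x$ unrolls inside the main vertices as $v_x,\,v_{f(x)},\,v_{f^2(x)},\ldots$, which $\varphi$ maps term-by-term to $x,\,f(x),\,f^2(x),\ldots$; this is exactly the execution of $A_f$ from $x$ in $\FS{\Gamma_f,1}$, with compliance witnessed by $\sigma=\mathrm{id}$. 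Hence $\FS{G_f,1}$ computes $f$, establishing universality of the family.

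The main obstacle is the rigidity bookkeeping for the tags: one has to rule out that some $v_x$ accidentally becomes equivalent in $G_f^\ast$ to another $v_y$, or to an internal vertex of some pendant. Taking $N$ larger than the diameter of the ``skeleton'' $H$ and using pendant lengths of the form $N+x$ that are strictly distinct handles confusions between distinct $v_x$'s, while the degree-at-least-$3$ condition at each main vertex separates them from the degree-$\le 2$ interior and degree-$1$ endpoints of the pendants. Once this combinatorial verification is carried out, the simulation argument is immediate.
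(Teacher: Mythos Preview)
Your proposal is correct and rests on the same core idea as the paper: build an unlabeled network in which the $n$ ``main'' vertices are rendered pairwise inequivalent by attaching distinct pendant gadgets, and then let the single robot step directly from the vertex encoding $x$ to the vertex encoding $f(x)$.

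The paper's construction is considerably simpler than yours, however. Instead of using the function graph $\{v_x,v_{f(x)}\}$ as the skeleton and long pendant paths as tags, the paper takes the complete graph $K_n$ as the skeleton and attaches exactly $i$ leaves to the $i$-th vertex. Using $K_n$ guarantees that every main vertex is adjacent to every other, so no care is needed about which edges the function requires; and the tag ``$i$ leaves at vertex $i$'' immediately forces distinct degrees on the main vertices, so the rigidity verification is a one-liner. This entirely sidesteps the bookkeeping you flag as ``the main obstacle'' (choosing $N$ large, ensuring degree at least $3$, avoiding collisions between tag interiors and main vertices). Your route works, but the paper's choice of skeleton and tags makes the whole argument fit in two sentences.
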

\begin{proof}
Given a function $f\colon \mathbb N_n\to\mathbb N_n$, take the complete graph $K_n$ and attach $i$ dangling vertices to its $i$-th vertex, for all $0\leq i<n$. To compute $f$, instruct the robot to move from the vertex with $i$ dangling vertices to the one with $f(i)$ dangling vertices (which is always distinguishable).
\end{proof}

However, complete graphs are very demanding networks. If no vertex in the network has more than two neighbors, universality requires more robots: two for paths and oriented rings, and three for unoriented rings.
%

\begin{theorem}\label{t:app1}
$\left\{\OP n1\mid n\geq 1\right\}$, $\left\{\UP n1\mid n\geq 1\right\}$, $\left\{\OR n1\mid n\geq 1\right\}$, and $\left\{\UR n2\mid n\geq 1\right\}$ are not universal.
\end{theorem}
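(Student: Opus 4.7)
The plan is to exhibit a single function that cannot be computed in any system from any of the four families, namely the 3-cycle $f\colon\mathbb{N}_3\to\mathbb{N}_3$ given by $f(x)=(x+1)\bmod 3$. The first step is to use Proposition~\ref{o1} (for the single-robot systems $\OP n 1$, $\UP n 1$, $\OR n 1$) and Property~\ref{p:urpaths} (for $\UR n 2$) to note that in all four families the configuration graph $\mathcal G(\Psi)$ is a disjoint union of bidirected paths, with a self-loop at every vertex. Consequently, any algorithm executing on $\Psi$ produces, at each step, a successor that lies in the same connected component as the current configuration and differs from it by at most one position along the path.

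I would then assume for contradiction that some algorithm $A$ on some such $\Psi$ computes $f$ via a surjective partial function $\varphi$, and, to handle the possible non-determinism uniformly, consider any \emph{stationary} scheduler strategy, i.e.\ one that resolves each non-deterministic choice in the same way every time it occurs. Under such a strategy the execution from any configuration $C_0$ becomes a deterministic orbit $C_i=g^i(C_0)$, where $g$ is a self-map on the configuration space that moves every vertex by at most one position along the path. The crucial structural lemma is that any such $g$ has orbits that are eventually periodic with period at most $2$: once the orbit enters its recurrent cycle, that cycle cannot contain a fixed point of $g$ (else it would collapse to length $1$), so its consecutive vertices differ by exactly $\pm 1$ along the path, and a closed $\pm 1$-walk of length $p$ in $\mathbb{Z}$ visiting $p$ distinct positions would be a Hamiltonian cycle in an interval of length $p$, which does not exist for $p\geq 3$.

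It follows that $\varphi(C_i)$ is eventually periodic with period $1$ or $2$, but by the definition of simulation $\varphi(C_i)=C'_{\sigma(i)}$ for a weakly increasing surjective $\sigma\colon\mathbb{N}\to\mathbb{N}$, where $C'_k$ cycles through $0,1,2$ with period $3$. Eventual period $1$ in $\varphi(C_i)$ forces $\sigma(i)\bmod 3$ to be eventually constant, and since $\sigma$'s step is at most $1$ this forces $\sigma$ itself to be eventually constant, contradicting surjectivity; eventual period $2$ forces $\sigma(i+2)\equiv\sigma(i)\pmod 3$ with $\sigma(i+2)-\sigma(i)\in\{0,1,2\}$, hence $\sigma(i+2)=\sigma(i)$, again contradicting surjectivity. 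This contradiction shows that $f$ is uncomputable in $\Psi$, so none of the four families is universal. The main obstacle is the non-determinism of $\UR n 2$, which the stationary-scheduler trick sidesteps by reducing the analysis to deterministic iteration on a path, where the period-$\leq 2$ structural fact applies directly.
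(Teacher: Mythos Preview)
Your proof is correct and follows the paper's approach: use Proposition~\ref{o1} and Property~\ref{p:urpaths} to see that all four configuration graphs are disjoint unions of bidirected paths with self-loops, and then argue that the cycle function on three (the paper uses $\lambda_m$ for any $m\geq 3$) cannot be computed on such a graph. The paper states the last implication in one line (``computing $\lambda_m$ requires a cycle of length at least $m$ in the configuration graph''), whereas you prove it explicitly via the eventual-period-at-most-$2$ lemma and the $\sigma$-surjectivity contradiction, and your stationary-scheduler device cleanly handles the non-determinism of $\UR n2$ that the paper's sketch leaves implicit.
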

\begin{proof}
This is trivially true for $\left\{\OR n1\mid n\geq 1\right\}$, since $\OR n1$ has only one configuration.

Let the \emph{cycle function} $\lambda_m\colon \mathbb N_m\to\mathbb N_m$ be defined as $\{(i,i+1)\mid i\in \mathbb N_{m-1}\}\cup\{(m-1,0)\}$. Note that, if a system computes $\lambda_m$, its configuration graph must contain a cycle of length at least $m$. By Proposition~\ref{o1} and Property~\ref{p:urpaths}, the configuration graphs of $\OP n1$, $\UP n1$, and $\UR n2$ consist of either a single path or two disjoint paths. It follows that none of these systems can compute $\lambda_m$, for any $m\geq 3$.
\end{proof}
%

Next we show that two robots are indeed sufficient to compute every function on arbitrarily long oriented and unoriented paths and oriented rings, and that three robots are sufficient for unoriented rings. First we introduce a definition: an algorithm is \emph{sequential} if it never instructs two robots to move at the same time.

\begin{lemma}\label{l:draw}
For all $n\geq 1$, $\mathbb N_n \preceq \OP{2n}2$. The algorithms used in the computations are sequential.
\end{lemma}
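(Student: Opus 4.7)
My plan is to prove the lemma by explicit construction: for each function $f\colon\mathbb N_n\to\mathbb N_n$, I would design a deterministic sequential algorithm on $\OP{2n}2$ that computes $f$. The plan hinges on two structural facts already established: $\OP{2n}2$ is deterministic (Property~\ref{p:opdet}), so any algorithm defines a single-valued transition on the configuration space; and its configuration graph contains an $n\times n$ bidirectional grid (Property~\ref{p:grid}) on the cells $(a,b)$ with $a\in\{0,\dots,n-1\}$ and $b\in\{n,\dots,2n-1\}$.

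I would place $n$ encoding configurations in the grid---say, the diagonal cells $v_i=(i,n+i)$---and declare $\varphi(v_i)=i$ for the partial surjective function defining the simulation. For each source $i$, I would build a finite one-robot-at-a-time walk $P_i$ in the configuration graph from $v_i$ to $v_{f(i)}$, and extend $\varphi$ to map each intermediate configuration of $P_i$ to $i$. The algorithm at each configuration on some $P_i$ executes the move dictated by $P_i$, and is stationary elsewhere; it is sequential by construction. Once such a consistent family $\{P_i\}_i$ is in hand, verifying the simulation of $A_f$ under $\FS{\Gamma_f,1}$ is routine: from $v_i$, the deterministic execution traces $P_i$ to $v_{f(i)}$, then $P_{f(i)}$ to $v_{f^2(i)}$, and so on, and the weakly increasing surjective $\sigma\colon\mathbb N\to\mathbb N$ that advances by one each time some $v_{f^k(i)}$ is reached witnesses compliance.

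The hard part is ensuring that the family $\{P_i\}$ is consistent: whenever two paths share a configuration $C$, they must dictate the same successor at $C$. A naive L-shaped routing (horizontal to target column, then vertical to target row) typically fails, since horizontal and vertical segments of different sources cross with incompatible directions. I would instead decompose the functional graph of $f$ into in-trees, one per element of $f$'s image (rooted at $v_k$ for $k\in f(\mathbb N_n)$), and embed each in-tree as a \emph{trunk-plus-branches} sub-graph of the grid where branches feeding a common root $v_k$ merge consistently into a shared trunk before arriving at $v_k$. If necessary, the triangular regions $T_L=\{(a,b):0\le a\le b\le n-1\}$ and $T_R=\{(a,b):n\le a\le b\le 2n-1\}$ of the configuration space (beyond the grid) can serve as extra scratch space for routing branches around conflicts with other in-trees. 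Since the total length of all $P_i$'s is $O(n^2)$ and the configuration space has $\Theta(n^2)$ cells, there is sufficient room to accommodate any $f$, though the detailed case analysis is delicate; this consistency check is what I expect to consume most of the proof.
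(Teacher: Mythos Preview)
Your setup is sound and you have correctly isolated the crux: once you pick representatives $v_i$ and try to route each $v_i$ to $v_{f(i)}$ by a grid walk, the whole proof reduces to showing that the family $\{P_i\}$ can be made \emph{consistent} (same successor at every shared configuration). But you do not actually prove this. The ``trunk-plus-branches'' heuristic and the area-counting remark ($O(n^2)$ total length in a $\Theta(n^2)$ space) are suggestive, not sufficient: having enough cells says nothing about whether the required one-outgoing-edge-per-cell constraint can be satisfied simultaneously for all $i$. Your proposed decomposition into in-trees ``one per element of $f$'s image'' also does not match the structure of a functional graph (each connected component is a single cycle with in-trees hanging off it, and cycle vertices are images of each other), so the pieces you want to route are not independent. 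As stated, the consistency step is a gap, not a case analysis you have merely deferred.

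The paper closes exactly this gap by recognizing that ``consistent routing of the $P_i$ in a grid'' is the same problem as drawing the functional graph of $f$ orthogonally in a grid. Concretely: the base graph of $\Gamma_f$ is planar (each component is a cycle with trees attached); one first reduces its maximum degree to~$3$ by splitting high-degree vertices, adding at most $n$ new vertices; the resulting planar graph on at most $2n$ vertices admits a planar orthogonal drawing in an $n\times n$ grid by a known result (Kant). Since $\mathcal G'(\OP{2n}{2})$ contains that $n\times n$ grid (Property~\ref{p:grid}), the drawing gives precisely the consistent, one-robot-at-a-time transition system you need, and sequentiality is automatic because orthogonal grid edges change a single coordinate. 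In short, the missing idea in your plan is planarity of the function graph plus an off-the-shelf orthogonal grid drawing theorem; with that, the ``delicate case analysis'' evaporates.
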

\begin{proof}
Let $f\colon \mathbb N_n\to\mathbb N_n$, and consider the base graph $G$ of the network induced by $f$. Each connected component of $G$ consists of a single cycle (possibly degenerating in a vertex with a self-loop) to which some disjoint trees are attached. Each tree is rooted at the vertex that it shares with the cycle, and all edges in the tree are directed towards the root. In particular, $G$ is planar.

We would like to reduce the maximum degree of $G$ to 3. We do so by a series of local transformations of $G$, as shown in Figure~\ref{f4}. If $G$ has a vertex $v$ of degree greater than 3, we transform it as follows. If $(u,v)$ and $(v,w)$ are edges of $G$, we delete them, we create a new vertex $v_u$, and we add the edges $(v,v_u)$, $(u,v_u)$, and $(v_u,w)$. This transformation creates a new vertex of degree 3, and decreases by 1 the degree of $v$. If $v$ has initial indegree $d>3$, this operation is performed on $v$ exactly $d-3$ times, adding $d-3$ vertices to the graph.

\begin{figure}[!htb]
\centering
\includegraphics[scale=1]{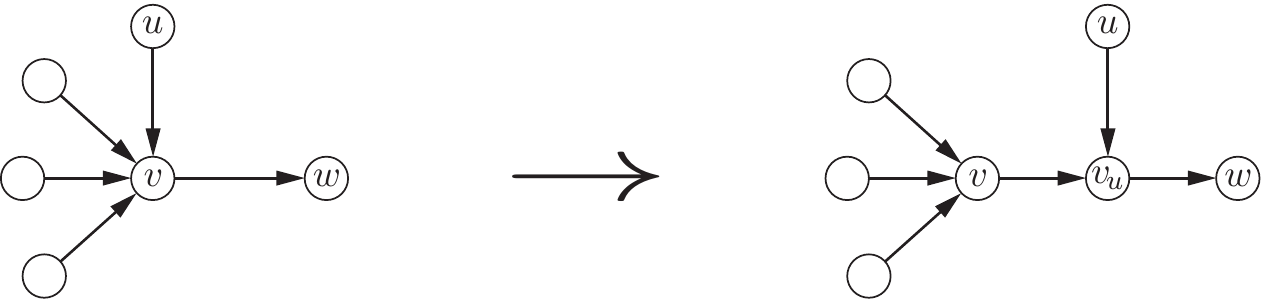}
\caption{Reducing the degree of $v$.}
\label{f4}
\end{figure}

Therefore, by repeatedly applying this transformation, we obtain a new planar graph $G'$ on at most $2n$ vertices, with maximum degree at most 3, such that $G$ is a graph minor of $G'$. As shown in~\cite{kant}, $G'$ admits a planar orthogonal grid drawing on an $n\times n$ grid. So, $G'$ is a minor of some graph $G''$ (obtained by adding vertices on some edges of $G'$) that in turn is a subgraph of an $n\times n$ grid with self-loops and bidirectional edges.

By Properties~\ref{p:grid} and~\ref{p:opdet}, this $n\times n$ grid is a subgraph of $\mathcal G'\left(\OP{2n}2\right)$. It is easy to check that $\OP{2n}2$ computes $f$ under the surjective partial function induced by the composition of these graph minor relations.

Indeed, whenever the computation reaches a configuration corresponding to a vertex of $G$ that has been splitted during the first set of transformations, it just follows the edges generated during the splitting. This is possible because all these edges represent deterministic moves, so they can be chosen by the algorithm and have to be complied by the fully synchronous scheduler.

On the other hand, if the computation reaches a configuration corresponding to an edge of $G$ that has been extended in order to embed it in the grid, the computation follows the extended edge. Again, this is possible because these edges represent deterministic moves.
\end{proof}

\begin{theorem}\label{t:pathuniv}
$\left\{\OP n2\mid n\geq 1\right\}$, $\left\{\UP n2\mid n\geq 1\right\}$, $\left\{\OR n2\mid n\geq 1\right\}$, and $\left\{\UR n3\mid n\geq 1\right\}$ are universal.
\end{theorem}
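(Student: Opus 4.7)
The plan is to derive each of the four universality claims from Lemma~\ref{l:draw}, combining it where possible with the domination results from Section~\ref{ss32}. The universality of $\left\{\OP n 2\mid n\geq 1\right\}$ is immediate: Lemma~\ref{l:draw} directly gives $\mathbb{N}_n \preceq \OP{2n}{2}$ for every $n$. For $\left\{\UP n 2\mid n\geq 1\right\}$ I would apply Theorem~\ref{t:op2up} to get $\OP{2n}{2} \preceq \UP{4n}{2}$ and then invoke the transitivity of $\preceq$; similarly, Theorem~\ref{t:up2ur} yields $\UP{4n}{2} \preceq \UR{12n-1}{3}$, which delivers universality of $\left\{\UR n 3\mid n\geq 1\right\}$.

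The remaining case, $\left\{\OR n 2\mid n\geq 1\right\}$, is the harder one. Theorem~\ref{t:op2or} cannot be invoked here because it requires $k \geq 2$ and only gives $\OP m 2 \preceq \OR{2m}{3}$, which has one robot too many. In fact, the configuration graph of $\OR n 2$ is essentially 1-dimensional---configurations are parameterized by the inter-robot distance $d \in \{0, 1, \ldots, \lfloor n/2\rfloor\}$ with edges allowing $d$ to change by any integer in $\{-2,-1,0,+1,+2\}$---so there is no hope of containing the 2-dimensional configuration graph of $\OP m 2$ as a subgraph. My plan is therefore to construct the simulation of $A_f$ by an algorithm on $\OR n 2$ directly. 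Given $f\colon\mathbb{N}_m\to\mathbb{N}_m$, I would lay out the vertices of the functional graph $\Gamma_f$ along the 1-dimensional configuration space of $\OR n 2$ for $n$ linear in $m$. Each $v \in \mathbb{N}_m$ is represented by a block of $d$-values, all labelled $v$ by the surjective partial map $\varphi$; within each block the algorithm (a local function with jumps of magnitude at most $2$) flows toward an exit configuration that transitions into the block representing $f(v)$. Cycles in $\Gamma_f$ of any length $\ell$ are realized by a back-and-forth Hamiltonian pattern of the jump-$2$ graph on $\ell$ points (for instance $0, 2, 4, \ldots, 2\lfloor \ell/2\rfloor, \ldots, 3, 1, 0$), and vertices of arbitrarily large indegree are accommodated by a \emph{funnel} block, obtained by alternating $v$-positions with predecessor positions so that each predecessor sits at distance at most $2$ from some $v$-position.

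The hard part will be orchestrating these local devices into a coherent global layout of the whole of $\Gamma_f$. I would proceed one connected component of $\Gamma_f$ at a time---each is a rho-shape consisting of a single cycle with trees attached---allocating a dedicated sub-interval of the configuration space to each component, first laying out its cycle via the Hamiltonian pattern above and then recursively embedding each attached subtree in a DFS-like manner around its root cycle vertex, applying the funnel trick at every branching. Buffer positions between components, never visited from labelled configurations, will prevent any interference between components. A routine counting argument then bounds $n$ polynomially (in fact linearly) in $m$, and the local compatibility of each block, together with the transitivity of $\preceq$, delivers $\mathbb{N}_m \preceq \OR n 2$ as required.
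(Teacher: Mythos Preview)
Your treatment of $\OP n2$, $\UP n2$, and $\UR n3$ matches the paper's proof exactly: Lemma~\ref{l:draw} plus Theorems~\ref{t:op2up} and~\ref{t:up2ur} and transitivity.

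For $\OR n2$ you correctly recognise that Theorem~\ref{t:op2or} is useless here and that a direct embedding of $\Gamma_f$ into the jump-$2$ path structure of $\mathcal G(\OR n2)$ is needed; this is also what the paper does. Your back-and-forth Hamiltonian pattern for realising cycles is fine. The gap is in the recursive funnel. As you describe it, a branching vertex $v$ with predecessors $u_1,\dots,u_k$ is laid out by alternating $v$-positions with predecessor positions, but once you ``recursively embed each attached subtree'' the predecessor slots expand into blocks of arbitrary length. Two consecutive $v$-positions then get separated by an entire subtree block, and since the simulation must stay on $\varphi$-defined configurations at every step, the intermediate $v$-position can no longer reach either a neighbouring $v$-position or an $f(v)$-position with a jump of magnitude at most $2$. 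Concretely, in a DFS-style layout such as $r,a,c,e,c,a,d,g,d,a,r$, the middle $a$-position (index $5$) is stranded: jumps $\pm 1,\pm 2$ land only on $c,d,e,g$. Your sketch does not explain how such positions escape, and no local patch to the funnel fixes this while keeping a single contiguous block per vertex.

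The paper sidesteps the whole difficulty by exploiting the non-injectivity of $\varphi$ much more aggressively than you do. Rather than one layout per component, it embeds one \emph{lollipop} per leaf: for each leaf $w$ of a tree attached to the cycle $L$, it lays out a fresh copy of $L$ together with the unique path from $w$ to $L$, and lets $\varphi$ identify all the copies. Each lollipop has maximum degree $3$ (only at the attachment point), so it embeds trivially in the jump-$2$ path, and there is no recursive branching to orchestrate. The cost is a quadratic rather than linear bound on $n$, but for universality that is irrelevant. Your plan can be repaired along the same lines: abandon the single-block-per-vertex constraint and allow one copy of the cycle (and of shared internal tree vertices) per leaf.
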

\begin{proof}
For $\left\{\OP n2\mid n\geq 1\right\}$, $\left\{\UP n2\mid n\geq 1\right\}$, and $\left\{\UR n3\mid n\geq 1\right\}$, this immediately follows from Lemma~\ref{l:draw}, Theorems~\ref{t:op2up} and~\ref{t:up2ur}, and the transitivity of $\preceq$.

For oriented rings, we need a different argument. The shape of the configuration graph of $\OR n2$ is illustrated in Figures~\ref{f1}(d) and~\ref{f1}(e). Given a function $f$ whose induced network is $\Gamma_f$, we can embed $\Gamma_f$ in $\mathcal G\left(\OR n2\right)$, provided that $n$ is large enough, for instance as shown in Figure~\ref{f3}. In general, each connected component of the base graph of $\Gamma_f$ is a cycle $L$ with some trees attached to it. for each leaf of such a tree, connected to $L$ via a path $P$, we embed a copy of $L$ in $\mathcal G\left(\OR n2\right)$ and attach a copy of $P$ to it, mapping each vertex to the appropriate vertex of $L\cup P$. The resulting surjective partial function $\varphi$ indeed defines a simulation.
\end{proof}

\begin{figure}[!htb]
\centering
\includegraphics[scale=1]{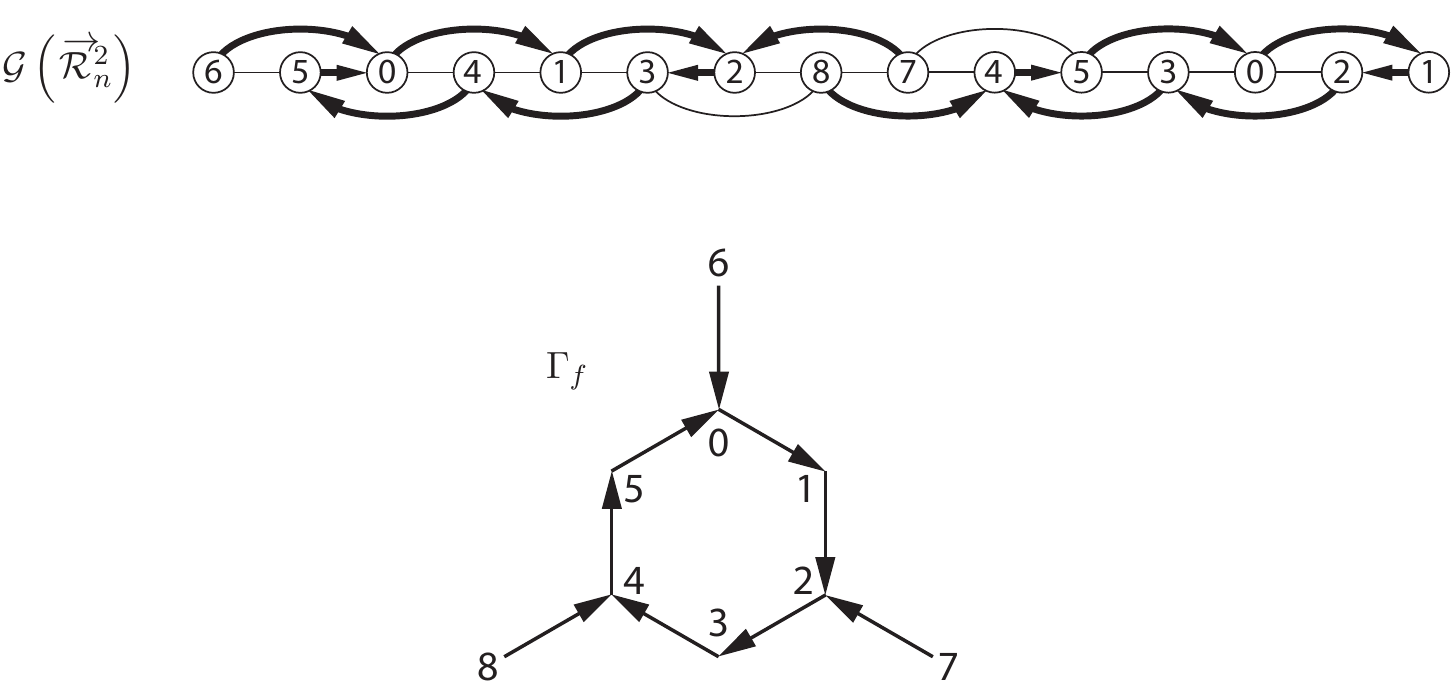}
\vspace{0.25cm}
\caption{Computing a function by two robots in an oriented ring.}
\label{f3}
\end{figure}

We can generalize this result in two directions. First, to systems of two robots on networks whose quotient graphs contain arbitrarily long paths.

\begin{theorem}\label{cor:1}
$\left\{\FS{G_n,2}\mid \mbox{$G^\ast_n$ contains a sub-path of length at least $n$}\right\}$ is universal.
\end{theorem}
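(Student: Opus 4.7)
The plan is to reduce to Theorem~\ref{t:pathuniv}. Given any $f\colon\mathbb N_m\to\mathbb N_m$, that theorem provides some $m'$ with $f\preceq\OP{m'}{2}$, so it suffices to show $\OP{m'}{2}\preceq\FS{G_n,2}$ whenever $G^\ast_n$ contains a sub-path of length at least $m'-1$; by transitivity of $\preceq$, any such $G_n$ (in particular, any member of our family with $n\geq m'-1$) then computes $f$.

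Fix such an $n$, let $v_0,v_1,\ldots,v_{m'-1}$ be the first $m'$ vertices of a sub-path of length at least $n$ in $G^\ast_n$, and pick representatives $r_0,\ldots,r_{m'-1}\in V(G_n)$. These are pairwise non-equivalent in $G_n$, since distinct vertices of the quotient graph correspond to distinct $\mathrm{Aut}(G_n)$-orbits. Define the surjective partial function $\varphi$ from $\mathcal C(G_n,2)$ onto the configuration space of $\OP{m'}{2}$ by sending any configuration whose two robots lie in $G^\ast_n$-classes $v_i,v_j$ (with $0\le i\le j\le m'-1$) to the oriented-path configuration $(i,j)$; $\varphi$ is undefined on all other configurations. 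Given an algorithm $A$ on $\OP{m'}{2}$, define the simulating algorithm $A'$ on $\FS{G_n,2}$ as follows: at a configuration in the domain of $\varphi$ with robots in classes $v_i,v_j$, consult what $A$ prescribes at $(i,j)$, and for each occupied equivalence class of the configuration instruct its robot(s) either to stay, or to move into an equivalence class of the configuration that is contained in the appropriate $v_{i\pm 1}$ (or $v_{j\pm 1}$) and that contains a neighbor of the occupied class.

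Two facts make $A'$ well-defined. First, the equivalence classes of any configuration are orbits of a subgroup of $\mathrm{Aut}(G_n)$, so each such class is contained in exactly one $G^\ast_n$-class $v_k$; thus ``the class in $v_{i+1}$ adjacent to the moving robot'' picks out a well-defined subset. Second, since $v_i$ and $v_{i+1}$ are adjacent in $G^\ast_n$, some representative of $v_i$ is adjacent in $G_n$ to some representative of $v_{i+1}$, and applying an appropriate automorphism of $G_n$ shows that every vertex of $v_i$ has a neighbor in $v_{i+1}$, so a valid target class always exists. When $i\neq j$ the robots lie in distinct $G^\ast_n$-classes and are hence distinguishable, so they can receive independent instructions; when $i=j$ they are treated as one indistinguishable class, and even if the scheduler non-deterministically splits them into different representatives of $v_{i\pm 1}$, the resulting configuration still lies in $\varphi^{-1}(i\pm 1,i\pm 1)$, the correct successor of $(i,i)$ under $A$.

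The main obstacle is verifying that each step of $A'$ from a configuration in $\varphi^{-1}(i,j)$ lands in $\varphi^{-1}(i',j')$, where $(i',j')$ is the successor prescribed by $A$, so that $\varphi$ intertwines the two dynamics with the identity time-rescaling. This follows directly from the two observations above: every moving robot is forced into the prescribed $v_k$-class, so the destination configuration has its two robots in classes $v_{i'},v_{j'}$, as required. Hence $A'$ simulates $A$, giving $\OP{m'}{2}\preceq\FS{G_n,2}$, which by transitivity yields $f\preceq\FS{G_n,2}$ and completes the argument.
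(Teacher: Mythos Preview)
Your proof is correct and follows essentially the same approach as the paper: both reduce to the universality of oriented paths (you via Theorem~\ref{t:pathuniv}, the paper via Lemma~\ref{l:draw} directly) and then show $\OP{m'}{2}\preceq\FS{G_n,2}$ by having the two robots navigate along the equivalence classes corresponding to the sub-path in $G^\ast_n$. Your treatment is more explicit than the paper's about the simulation mechanism—in particular, you verify that every vertex of $v_i$ has a neighbor in $v_{i\pm 1}$, and you handle the potential non-determinism when $i=j$—whereas the paper's proof simply asserts that a robot ``is interpreted as lying on the corresponding vertex of $G^\ast_n$ and follows whatever algorithm it is simulating.''
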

\begin{proof}
We prove that $\OP n2\preceq\FS{G_n,2}$, and the universality follows from Lemma~\ref{l:draw}. By assumption, the robots can agree on an oriented path $P$ of length $n$ in $G^\ast_n$, since all its vertices are distinguishable, due to the definition of quotient graph. Then, a robot located at some vertex of $G_n$ is interpreted as lying on the corresponding vertex of $G^\ast_n$ and follows whatever algorithm it is simulating, remaining on vertices of $G_n$ corresponding to $P$; this way, the two robots can simulate $\OP n2$ on $G_n$.
%
\end{proof}

Theorem~\ref{t:pathuniv} can also be generalized to systems of three robots on networks with arbitrarily long \emph{girths} (the girth of a graph being the length of its shortest cycle, or infinity if there are no cycles).

\begin{theorem}\label{cor:2}
$\left\{\FS{G_n,3}\mid \mbox{the girth of $G_n$ is at least $n$ and finite}\right\}$ is universal.
\end{theorem}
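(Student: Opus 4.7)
The plan is to mimic the proof of Theorem~\ref{cor:1}: I will show that $\OP{m}{2}\preceq\FS{G_n,3}$ whenever $G_n$ has finite girth $g\ge 4m+1$. Combined with Lemma~\ref{l:draw}, this immediately yields universality, since every $f\colon\mathbb N_N\to\mathbb N_N$ is then computed by $\OP{2N}{2}$, and hence by $\FS{G_n,3}$ for any $n\ge 8N+1$. The construction is a ``landmark'' simulation on a shortest cycle $C=v_0v_1\cdots v_{g-1}v_0$ of $G_n$, in the style of Theorem~\ref{t:up2ur}: one robot will sit immobile at $v_0$, while the other two simulate $\OP{m}{2}$ on an arc of $C$ far from the landmark.

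The rigidity used throughout is that $C$ is isometrically embedded in $G_n$ up to distance $\lfloor g/2\rfloor$; otherwise a shortcut through $G_n\setminus C$ between two vertices of $C$ would combine with an arc of $C$ to form a cycle of length less than $g$, violating the girth bound. I will let the surjective partial function $\varphi$ be defined exactly on those configurations in which one robot (the \emph{landmark}) sits at $v_0$ while the other two (\emph{active}) robots sit at $v_{d_1+m}$ and $v_{d_2+m}$ with $0\le d_1\le d_2\le m-1$, and I will send such a configuration to the $\OP{m}{2}$-configuration having its two robots at positions $d_1$ and $d_2$. For well-definedness, any automorphism of $G_n$ preserving such a configuration must fix $v_0$ (the landmark is the unique robot at distance at least $m$ from each of the other two, while those lie within $m-1$ of each other), and must fix $\{v_{d_1+m},v_{d_2+m}\}$ setwise, since the only other vertex of $C$ at distance $d_i+m$ from $v_0$ is $v_{g-(d_i+m)}$, which lies outside the canonical window $\{v_m,\ldots,v_{2m-1}\}$ because $g-(d_i+m)>2m-1$. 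Surjectivity is immediate.

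The simulating algorithm $A$ leaves the landmark at $v_0$ and moves the two active robots along $C$ according to the $\OP{m}{2}$-algorithm $A'$ that computes $f$; in the stacked case $d_1=d_2$ the two active robots move jointly, mirroring the fact that $\OP{m}{2}$ cannot separate them either. The critical verification is that each intended destination $v_{j\pm 1}$ of an active robot at $v_j$ is a singleton indistinguishable class of the current configuration, so that the fully synchronous scheduler has no latitude: any alternative neighbor $u$ of $v_j$ at graph-distance $j\pm 1$ from $v_0$ would, together with a shortest $v_0$-$u$ path and the arc of $C$ from $v_j$ to $v_0$, produce a closed walk of length at most $2(d_i+m)+2\le 4m<g$ and hence a cycle shorter than $g$, contradicting the girth. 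The main obstacle is to perform this short-cycle bookkeeping uniformly for the well-definedness of $\varphi$, the stacked case, and every move destination that arises in a simulated execution; but in every instance the same girth-based argument resolves the ambiguity.
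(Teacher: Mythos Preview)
Your approach is the same as the paper's---anchor one robot as a landmark on a shortest cycle and let the other two simulate $\OP m2$ on a far arc---but your central claim that every intended destination $v_{j\pm 1}$ is a \emph{singleton} indistinguishability class is not justified for the \emph{outward} direction, and this is precisely the subtlety the paper's proof has to address. Your short-cycle argument collapses when $u\neq v_{j+1}$ is a neighbour of $v_j$ with $d(v_0,u)=j+1$: one shortest $v_0$--$u$ path may simply be $v_0v_1\cdots v_ju$, so the closed walk you form is a pure retracing and contains no cycle at all. A concrete witness: glue two $g$-cycles along the arc $v_0\cdots v_j$ (the third cycle created has length $2(g-j)\ge g$, so the girth stays $g$); then $v_j$ has two neighbours $v_{j+1},w_{j+1}$ at distance $j+1$ from $v_0$, exchanged by an automorphism fixing $v_0,\dots,v_j$.

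This is fatal in your stacked case: with both active robots at $v_j$ moving ``jointly'' outward, the scheduler may send one to $v_{j+1}$ and the other to $w_{j+1}$, producing the configuration $(v_0,v_{j+1},w_{j+1})$, which is \emph{not} equivalent to any configuration in your domain (the two non-landmark robots are at distance~$2$, yet any domain configuration with equal landmark-distances has them coinciding). The paper circumvents this by explicitly invoking the \emph{sequentiality} of the algorithms furnished by Lemma~\ref{l:draw}: stacked robots never move, so the dangerous stacked-outward step never occurs. For the remaining case---only the farthest robot $c$ steps outward---the paper does \emph{not} claim vertex-level determinism; it accepts that the move may be non-deterministic and argues instead that every outcome keeps the three robots on \emph{some} shortest cycle through the unique shortest path $P$ joining them, defining $\varphi$ on all such configurations. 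Your proof can be repaired the same way: invoke sequentiality for the stacked case, and for the single outward step either broaden $\varphi$ as the paper does, or observe that any neighbour of $v_j$ in the class of $v_{j+1}$ is the image of $v_{j+1}$ under an automorphism fixing all three robot positions, so the resulting configuration is unchanged.
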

\begin{proof}
We show that $\mathbb N_n \preceq \FS{G_{8n},3}$. By Lemma~\ref{l:draw}, it is sufficient to simulate any \emph{sequential} algorithm $A$ for $\OP{2n}2$. In particular, $A$ does not make the two robots move if they lie on the same vertex, or both would have to move, and the algorithm would not be sequential. To do the simulation, we choose a shortest cycle $L$ in $G_{8n}$, and we initially place all three robots on it: one robot $a$ will remain fixed in one vertex, then we will keep $2n-1$ empty vertices on $L$ after $a$, while the other two robots $b$ and $c$ simulate $A$ on the next $2n$ vertices, with $c$ always farthest from $a$. This makes the robots distinguishable (unless $b$ and $c$ coincide), because $L$ has length at least $8n$ (see Figure~\ref{f5}).

\begin{figure}[!htb]
\centering
\includegraphics[scale=1]{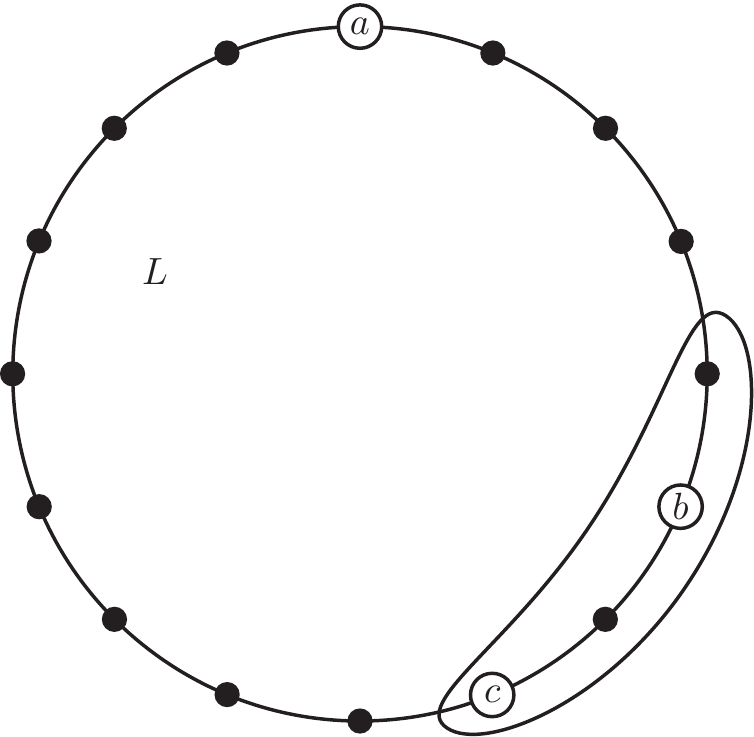}
\caption{Proof of Theorem~\ref{cor:2}.}
\label{f5}
\end{figure}

Since $L$ is a shortest cycle, there is a unique shortest path $P$ in $G_{8n}$ containing all three robots. Indeed, if this were not the case, there would be two vertices of $L$ connected by two disjoint paths of length at most $4n-1$, implying the existence of a cycle of length at most $8n-2$, and contradicting the fact that the girth is at least $8n$.

Due to the uniqueness of $P$, if $b$ and $c$ have to move within $P$, they can deterministically do so. This may cause $b$ and $c$ to switch positions, in which case we simply rename them and we pretend they have not moved.

If $c$ has to move away from $b$ (hence $b$ and $c$ are not on the same vertex, and $b$ stays still because $A$ is sequential), it proceeds along a shortest cycle that contains $P$. This could be a non-deterministic move, but it will indeed keep the robots on a same shortest cycle, although maybe not $L$. All the configurations obtained this way are mapped to the corresponding configuration of $\OP{2n}2$, and this yields a surjective partial function that describes the simulation.
\end{proof}

Our conjecture is that the above results characterize the universal classes of systems with at least three robots on unlabeled networks.

\begin{conjecture}\label{con:2}
The set $\left\{\FS{G_i,k}\mid i\geq 0\right\}$, where $k\geq 3$ and every $G_i$ is an unlabeled network, is universal if and only if either the quotient graphs $G^\ast_i$ have unboundedly long sub-paths, or the graphs $G_i$ have unboundedly long shortest cycles.
\end{conjecture}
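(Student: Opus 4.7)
The plan is to split the biconditional. The \emph{if} direction is immediate from Theorems~\ref{cor:1} and~\ref{cor:2}: if the quotient graphs $G^\ast_i$ contain sub-paths of unbounded length, Theorem~\ref{cor:1} applied to the corresponding subfamily already gives universality, and if instead the girths of $G_i$ are unbounded, Theorem~\ref{cor:2} does the same. Universality of a subfamily trivially extends to the whole family, so the real work lies in the converse.

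For the \emph{only if} direction, suppose that the sub-paths of $G_i^\ast$ are bounded by some constant $L$ and the girths of $G_i$ are bounded by some constant $g$. We aim to exhibit an integer $N=N(L,g,k)$ such that no system in the family computes any function on $\mathbb{N}_n$ for $n>N$. Our main line of attack is to establish a structural bound of the form $|\mathcal{C}(G_i,k)|\leq h(L,g,k)$: once this is in hand, universality fails because, by the definition of simulation via a surjective partial map $\varphi$, any computation of a function on $\mathbb{N}_n$ requires at least $n$ configurations in the domain of $\varphi$, forcing $n\leq h(L,g,k)$.

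We would try to bound $|\mathcal{C}(G_i,k)|$ in two stages. First, bound $|V(G_i^\ast)|$: note that $G_i^\ast$ inherits diameter $\leq L$ from the sub-path hypothesis, and that its girth is at most the girth of $G_i$, since each cycle of $G_i$ projects to a closed walk in $G_i^\ast$ containing a cycle of at most the same length. Hence $G_i^\ast$ has both bounded diameter and bounded girth. Second, from a bound on $|V(G_i^\ast)|$ deduce a bound on $|\mathcal{C}(G_i,k)|$ by enumerating, for fixed $k$, the orbits of $\mathrm{Aut}(G_i)$ on $V(G_i)^k$ according to the distribution of robots over vertex orbits and the ``distance-types'' of robot tuples within and across orbits, in the spirit of the case analyses for $K_n$ and $K_{n,n}$.

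The main obstacle is the first stage. Even a graph with bounded diameter and bounded girth can have unboundedly many vertices---$K_n$ is the obvious example---so a naive Moore-bound argument cannot bound $|V(G_i^\ast)|$ without some control on its degree. Our proposal is to exploit the fact that $G_i^\ast$ is not an arbitrary graph but the orbit quotient of an unlabeled network $G_i$ of bounded girth, which restricts the possible orbit profiles and should rule out high-degree quotients. If this structural route cannot be pushed through, a back-up plan is to argue directly on the configuration graph: the cycle function $\lambda_m$ from Theorem~\ref{t:app1} requires a cycle of length $\geq m$ in $\mathcal{G}(\FS{G_i,k})$, and the ``long chain'' function $i\mapsto\max(0,i-1)$ requires a directed path of length $\geq m$, so one would show that neither long cycles nor long directed paths can exist in $\mathcal{G}(\FS{G_i,k})$ without forcing either arbitrarily long sub-paths in $G_i^\ast$ or arbitrarily long shortest cycles in $G_i$. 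We expect this combinatorial route, rather than the structural one, to be the technically heaviest part of the proof.
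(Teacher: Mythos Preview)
The statement you are attempting to prove is labeled a \emph{conjecture} in the paper and is explicitly listed among the open problems in Section~\ref{s6}; the paper gives no proof. So there is nothing to compare your proposal against, and any argument you give must stand entirely on its own.

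Your ``if'' direction is not as immediate as you claim. Theorems~\ref{cor:1} and~\ref{cor:2} are stated and proved for exactly $2$ and $3$ robots, respectively, while the conjecture concerns an arbitrary fixed $k\geq 3$. To pass from $\FS{G_i,2}$ or $\FS{G_i,3}$ to $\FS{G_i,k}$ you implicitly need something like $\FS{G,k}\preceq\FS{G,k+1}$, which is precisely Conjecture~\ref{con:1}, also left open. One can plausibly adapt the proofs of Theorems~\ref{cor:1} and~\ref{cor:2} by parking the extra $k-2$ or $k-3$ robots on a designated vertex, but this needs to be argued, not assumed.

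Your ``only if'' direction has a genuine gap in its main line of attack. You propose to bound $|\mathcal C(G_i,k)|$ uniformly in $i$, but this is false under your hypotheses. Take $G_d$ to be the $d$-dimensional hypercube $Q_d$ with all labels equal: it is vertex-transitive, so $G_d^\ast$ is a single vertex (sub-paths of length $0$), and its girth is $4$ for all $d\geq 2$. Yet already for $k=2$ the configurations of $\FS{Q_d,2}$ are indexed by the Hamming distance between the two robots, giving $d+1$ configurations; for $k\geq 3$ the configuration space grows even faster. So bounded quotient diameter and bounded girth do \emph{not} bound the configuration space, and your structural route cannot succeed as stated. Your back-up plan, arguing directly that long cycles or long directed paths in $\mathcal G(\FS{G_i,k})$ force long sub-paths in $G_i^\ast$ or long shortest cycles in $G_i$, is where the entire content of the converse would have to live; you do not carry it out, and the hypercube example shows it will require a much finer analysis of how robots can encode state through relative positions than anything in your outline. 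This is presumably why the statement remains a conjecture.
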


\section{Optimizing Network Sizes}\label{s5}

\noindent In this section, our goal is to compute all the functions on $\mathbb N_n$ under the fully synchronous scheduler, using the smallest possible network, and perhaps a large number of robots. We are able to approximate the minimum size of such a network up to a factor that tends to 2 as $n$ goes to infinity, using very short oriented paths. Nonetheless, thanks to the simulation tools developed in Section~\ref{ss32}, we could as well use unoriented paths or rings, again achieving the optimum size up to factors that tend to small constants.

\begin{lemma}\label{l:ham}
For all $n,k\geq 1$, $\OP{n!}k\preceq \OP{kn}{kn(n-1)/2}$.
\end{lemma}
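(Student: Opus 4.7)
The plan is to encode each value $m \in \mathbb{N}_{n!}$ as a permutation $\pi_m$ of $\mathbb{N}_n$ placed as robot-counts on a length-$n$ block of the oriented path, and then use $k$ such disjoint blocks side by side. Concretely, I would identify $m$ with the $m$-th permutation in the Steinhaus--Johnson--Trotter (SJT) ordering, and encode it by placing $\pi_m(i)$ robots at the $i$-th vertex of a block of $n$ consecutive vertices. Since $\sum_{i=0}^{n-1} i = n(n-1)/2$, each block uses exactly $n(n-1)/2$ robots, so partitioning the $kn$ vertices of $\OP{kn}{kn(n-1)/2}$ into $k$ consecutive blocks $B_1,\ldots,B_k$ yields configurations encoding any tuple $(m_1,\ldots,m_k)\in\mathbb{N}_{n!}^k$. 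The partial function $\varphi$ sends each such tuple to the corresponding multiset configuration in $\mathcal{C}(\OP{n!}{k})$.

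The defining property of SJT is that $\pi_m$ and $\pi_{m\pm 1}$ differ by an adjacent transposition, i.e., by swapping the values at positions $i$ and $i+1$ for some $i$. In $\OP{n}{n(n-1)/2}$, such a swap is realized in a single synchronous step: the algorithm instructs the super-robot at vertex $i$ to move right to $i+1$ while the super-robot at $i+1$ moves left to $i$, the two groups crossing over without merging with any other group (the degenerate case in which one of them is empty reduces to a single one-step move). Because every vertex of the oriented path has a unique label and the system is deterministic by Property~\ref{p:opdet}, the algorithm can prescribe these per-vertex moves unambiguously; the self-loop $m \to m$ corresponds to all robots staying still. Replicating this construction independently in each of the $k$ vertex-disjoint blocks simulates any concurrent change in the tuple $(m_1,\ldots,m_k)$, and robots never cross block boundaries because the prescribed swaps are strictly internal to each block.

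The main obstacle is respecting the indistinguishability of the robots of $\OP{n!}{k}$: whenever two simulated robots share a vertex, i.e., $m_j = m_{j'}$, the fully synchronous rules force them to move identically, so the corresponding blocks $B_j$ and $B_{j'}$ must execute the \emph{same} SJT swap. The simulating algorithm handles this by decoding each block's value from its local robot counts, grouping blocks with equal values, and applying to each equivalence class the unique swap dictated by the algorithm $A'$ being simulated for the shared vertex. With this synchronization in place, verifying the remaining compliance conditions --- that $\varphi$ is surjective onto $\mathcal{C}(\OP{n!}{k})$, that the simulating execution preserves the validity of $\varphi$ at every step, and that $\varphi(C_i)=C'_i$ holds throughout with $\sigma$ the identity --- becomes a routine check given the per-block SJT structure.
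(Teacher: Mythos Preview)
Your proposal is correct and follows essentially the same approach as the paper: partition the oriented path of length $kn$ into $k$ blocks of $n$ vertices, encode a value in $\mathbb N_{n!}$ in each block as a permutation of the robot counts $0,1,\ldots,n-1$, and use the Steinhaus--Johnson--Trotter order so that a one-step change of $\pm 1$ in a simulated robot's position corresponds to a single adjacent swap within its block. Your treatment is in fact more careful than the paper's, which speaks loosely of ``the $i$-th robot of $\OP{n!}k$'' without addressing the case $m_j=m_{j'}$; your observation that blocks encoding the same value must perform the same swap (which is automatic once the simulating algorithm reads off $A'$'s instruction from the decoded multiset) fills that gap.
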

\begin{proof}
We divide the base graph of $\OP{kn}{kn(n-1)/2}$ into $k$ sub-paths of length $n$. In every sub-path, we place a different amount of robots on each vertex, from $0$ to $n-1$ robots. The possible placements of such robots within a sub-path correspond to the $n!$ permutations of $n$ distinct objects. It is well known that the set of permutations can be ordered in such a way that each permutation is obtained from the previous one by swapping only two adjacent objects~\cite{johnson}. If we let the $i$-th permutation under this ordering encode the $i$-th vertex of $\OP{n!}k$, we can simulate a move of $i$-th robot of $\OP{n!}k$ by simply swapping the robots occupying two adjacent vertices of the $i$-th sub-path of $\OP{kn}{kn(n-1)/2}$.
\end{proof}

\begin{theorem}\label{t:opt1}
For all $n\geq 1$, $\mathbb N_n\preceq \OP{2m}{m(m-1)}$, with $(m-1)!< 2n \leq m!$.
\end{theorem}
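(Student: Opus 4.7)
The plan is to chain three reductions together via transitivity of $\preceq$ (Proposition on transitivity from Section~\ref{s3}), producing the desired simulation $\mathbb{N}_n \preceq \OP{2m}{m(m-1)}$. The three links in the chain are $\mathbb{N}_n \preceq \OP{2n}{2}$, then $\OP{2n}{2} \preceq \OP{m!}{2}$, then $\OP{m!}{2} \preceq \OP{2m}{m(m-1)}$.

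The first link is exactly Lemma~\ref{l:draw}, quoted verbatim: any function on $\mathbb{N}_n$ is computed by two robots on an oriented path of length $2n$. The third link is Lemma~\ref{l:ham} with $k=2$ and with its internal variable $n$ renamed to $m$, which yields $\OP{m!}{2} \preceq \OP{2m}{2m(m-1)/2} = \OP{2m}{m(m-1)}$. Both of these are direct applications.

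The only nontrivial work is the middle link $\OP{2n}{2} \preceq \OP{m!}{2}$, which uses the hypothesis $2n \leq m!$. This is a general embedding principle for oriented paths: a longer oriented path dominates any shorter one. The simulation function $\varphi$ sends each configuration of $\OP{m!}{2}$ in which both robots occupy the first $2n$ vertices to the corresponding configuration of $\OP{2n}{2}$, and is undefined otherwise. Because the path is oriented, each robot can locate the distinguished endpoint and determine whether it is within the first $2n$ vertices; any algorithm for $\OP{2n}{2}$ can therefore be implemented on $\OP{m!}{2}$ without ever instructing a robot to cross beyond vertex $2n-1$, so the configuration class on which $\varphi$ is defined is closed under the simulating algorithm.

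I do not expect any real obstacle: the middle step is a short embedding argument, and the outer two steps are direct invocations of Lemmas~\ref{l:draw} and~\ref{l:ham}. The choice $(m-1)! < 2n \leq m!$ is not needed for correctness---only the upper inequality is used in Step~2---but the lower inequality is the tightness condition that makes $m$ essentially the smallest admissible value, aligning with the paper's stated goal of approximating the optimal network size within a factor tending to $2$.
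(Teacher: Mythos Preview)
Your proof is correct and follows the same approach as the paper, which simply writes ``Immediate from Lemma~\ref{l:draw}, Lemma~\ref{l:ham} with $k=2$, and the transitivity of $\preceq$.'' The only difference is that you make explicit the middle embedding step $\OP{2n}{2}\preceq\OP{m!}{2}$, which the paper leaves implicit; your justification of that step via restricting to the first $2n$ vertices is straightforward and sound.
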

\begin{proof}
Immediate from Lemma~\ref{l:draw}, Lemma~\ref{l:ham} with $k=2$, and the transitivity of $\preceq$.
\end{proof}

This tells us that, on a network with $|V|$ vertices, all the functions on a set of size $2(|V|/2)!$ can be computed, provided that enough robots are available. We can also show that, on the same network, it is impossible to compute all functions on a set of size $|V|!+1$.

\begin{theorem}\label{t:opt2}
For all networks $G=(V,E,\ell)$ and all $n,k\geq 1$, if $\mathbb N_n \preceq \FS{G,k}$, then $|V|!\geq n$.
\end{theorem}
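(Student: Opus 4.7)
The plan is to consider the cyclic function $\lambda_n\colon \mathbb{N}_n \to \mathbb{N}_n$ defined by $\lambda_n(i) = (i+1) \bmod n$, which must be computed by some algorithm $A$ on $\FS{G,k}$ via a surjective partial function $\varphi\colon \mathcal{C}(G,k) \to \mathbb{N}_n$. Since any execution of $A$ starting from a configuration $C_0 \in \varphi^{-1}(0)$ has $\varphi$-values cycling through all of $\mathbb{N}_n$, and since $\mathcal{C}(G,k)$ is finite, such an execution eventually enters a periodic orbit at the configuration level. Within this orbit the $\varphi$-values take all $n$ values, so the orbit must contain at least $n$ distinct configurations; hence its length $L$ satisfies $L \geq n$.

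The heart of the argument is then to show $L \leq |V|!$. I would lift the cycle $C_0 \to C_1 \to \cdots \to C_{L-1} \to C_0$ to a sequence of arrangements $a_0, a_1, \ldots, a_{L-1}$ by fixing a scheduler-compatible realization of each step, so that $a_i \in C_i$ and $a_{i+1}$ is obtained from $a_i$ by moving each robot to a specific destination. Because $C_L = C_0$, there exist $\alpha \in \text{Aut}(G)$ and $\pi \in S_k$ with $a_L = \alpha^{-1} \circ a_0 \circ \pi$. The key observation is that each step within a configuration-level cycle must act on the vertex set as a permutation $\beta_i \in S_V$: any irreversible merging of distinct robot groups would alter the multiset of per-vertex robot counts and prevent the trajectory from returning to $C_0$. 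Hence $a_i = \beta_{i-1} \circ \cdots \circ \beta_0 \circ a_0$ for some sequence in $S_V$, and the $L$ distinct configurations $C_0, \ldots, C_{L-1}$ correspond to $L$ distinct cosets in the orbit of $a_0$ under the $S_V$-action, modulo the equivalence determined by $\text{Aut}(G) \times S_k$. Since $|S_V| = |V|!$, this orbit has size at most $|V|!$, so $L \leq |V|!$.

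Combining $n \leq L \leq |V|!$ yields the claim. The main obstacle is rigorously justifying the no-merging argument when $G$ has nontrivial automorphisms, as the fully synchronous scheduler may then split robots across equivalent destination vertices, superficially allowing mergings and subsequent un-mergings. Handling this carefully will require showing that any apparent split inside a cycle is compensated by a matching merge, so that the net $L$-step action is still captured by an element of $\text{Sym}(V)$ and the orbit-stabilizer count is bounded by $|V|!$.
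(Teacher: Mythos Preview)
Your high-level plan matches the paper's proof exactly: use the cycle function $\lambda_n$, find a periodic orbit of length $L\geq n$ in some execution, argue that no merging of robot groups can occur inside the cycle, and conclude $L\leq |V|!$.

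Where you diverge is in the bookkeeping and in handling the gap you yourself flag. You lift to arrangements, try to write each step as a permutation $\beta_i\in S_V$, and invoke an orbit/coset count. This is heavier than necessary and, as you note, breaks down as stated when the scheduler can split robots sitting on the same vertex across equivalent neighbors: in that case $a_{i+1}$ is \emph{not} of the form $\beta_i\circ a_i$, and your orbit-of-$a_0$-under-$S_V$ argument does not apply. Your proposed patch (``splits are compensated by merges so the net $L$-step action still lies in $\mathrm{Sym}(V)$'') is vague, and it is not clear it can be made to work for arbitrary executions.

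The paper sidesteps this entirely by picking the execution rather than fighting a generic one. Observe that, at every step, all robots on a common vertex compute the same destination class, and that class always contains at least one vertex reachable in one hop; hence the scheduler may legally send \emph{all} of them to that single vertex. Fix the execution in which the scheduler always does so. In this execution robots on the same vertex never separate, so the number $q$ of occupied vertices is nonincreasing; inside the eventual cycle it is therefore constant, and no merge can occur there either (a merge would strictly decrease $q$, preventing return to $C$). Now the configurations in the cycle are determined by an injection of the $q$ robot groups into $V$, giving at most $|V|!/(|V|-q)!\leq |V|!$ of them. This replaces both your permutation machinery and your acknowledged obstacle with a one-line choice of adversary.
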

\begin{proof}
Let $A$ be an algorithm that computes the cycle function $\lambda_n=\{(i,i+1)\mid i\in \mathbb N_{n-1}\}\cup\{(n-1,0)\}$ under $\FS{G,k}$, according to the surjective partial function by $\varphi\colon \mathcal C(G,k)\to\mathbb N_n$. For any execution $E=(C_i)_{i\geq 0}$ of $A$ such that $C_0\in\varphi^{-1}(0)$, the sequence $(\varphi(C_i))_{i\geq 0}$ must span the whole range $\mathbb N_n$ infinitely often. Moreover, since $\mathcal C(G,k)$ is finite, there is a configuration $C$ that occurs infinitely many times in $E$. Hence there are two such occurrences, say $C_j=C_{j'}=C$, between which $E$ spans at least $d\geq n$ different configurations. During this fragment of the execution, no two separate sets of robots may end up on the same vertex at the same time, or they become impossible to separate deterministically, contradicting the fact that $C$ must be reached again. In particular, the number of vertices that are occupied by some robots remains the same, $q$. It follows that $d$ cannot be larger than $|V|!/(|V|-q)!$. Thus $n\leq d\leq |V|!/(|V|-q)!\leq |V|!$, as desired.
\end{proof}

%
%

\section{Further Work}\label{s6}

\noindent In addition to the fully synchronous scheduler, also a semi-synchronous and an asynchronous one can be defined. Both schedulers may activate any subset of the robots at each turn, keeping the others quiescent. The asynchronous scheduler may even delay the robots, making them move based on obsolete observations of the network.

Our universality results can be extended to both these schedulers, by observing that all the algorithms we used in our simulations are sequential. We can also prove a weaker version of Theorem~\ref{t:opt1} for these schedulers: $\mathbb N_n\preceq \OP{2m}{m}$, with $m=O(\log n)$; we have a matching lower bound for both schedulers, as well. These results indicate that the semi-synchronous and asynchronous schedulers, albeit not drastically reducing the robots' computing powers, make them somewhat less efficient.

We leave two open problems: Conjectures~\ref{con:1} and~\ref{con:2}.

\bibliographystyle{plain}

\begin{thebibliography}{99}

\bibitem{BlMPT10}
L.~Blin, A.~Milani, M.~Potop{-}Butucaru, and S.~Tixeuil.
\newblock Exclusive perpetual ring exploration without chirality.
\newblock In {\em Proc.\ 24th Int.\ Symp.\ on
  Distributed Computing (DISC)},  312--327, 2010.
  

\bibitem{BoMPT11}
F.~Bonnet, A.~Milani, M.~Potop{-}Butucaru, and S.~Tixeuil.
\newblock Asynchronous exclusive perpetual grid exploration without sense of
  direction.
\newblock In {\em Proceedings of 15th International Conference on
  Principles of Distributed System (OPODIS)}, 251--265, 2011.
  
\bibitem{ChFMS10}
J.~Chalopin, P.~Flocchini, B.~Mans, and N.~Santoro.
\newblock Network exploration by silent and oblivious robots.
\newblock In {\em Proceedings of 36th International Workshop on Graph
  Theoretic Concepts in Computer Science (WG)}, pages 208--219, 2010.
  
\bibitem{DaDKN16}
G.~D'Angelo, G.~Di Stefano, R.~Klasing, and A.~Navarra.
\newblock Gathering of robots on anonymous grids without multiplicity
  detection.
\newblock  {\em Theoretical Computer Science}, 610: 158--168, 2016.

\bibitem{DaDN14a}
G.~D'Angelo, G.~Di Stefano, and A.~Navarra.
\newblock Gathering six oblivious robots on anonymous symmetric rings. 
\newblock  {\em  J.\ Discrete Algorithms},  26: 16--27, 2014.

\bibitem{DaDN14b}
G.~D'Angelo, G.~Di Stefano, and A.~Navarra.
Gathering on rings under the Look-Compute-Move model. 
\newblock  {\em Distributed Computing},  27(4): 255--285, 2014.

\bibitem{DaDNNS15}
G.~D'Angelo, G.~Di Stefano,  A.~Navarra, N. Nisse, and K. Suchan.
\newblock Computing on rings by oblivious robots: A unified approach for different tasks. 
\newblock  {\em Algorithmica} 72(4): 1055--1096, 2015.



\bibitem{DeLPT14}
S.~Devismes, A.~Lamani, F.~Petit, and S.~Tixeuil.
\newblock Optimal torus exploration by  oblivious mobile robots.
\newblock INRIA Technical Report HAL-00926573, 2014.

\bibitem{DePT13}
S.~Devismes, F.~Petit, and S.~Tixeuil.
\newblock Optimal probabilistic ring exploration by semi-synchronous oblivious
  robots.
\newblock  {\em Theoretical Computer Science}, 498: 10--27, 2013.
  
  \bibitem{ElB11}
Y.~Elor and A.~M. Bruckstein.
\newblock Uniform multi-agent deployment on a ring.
\newblock {\em Theoretical Computer Science}, 412: 783--795, 2011.

\bibitem{FIPS10}
P.~Flocchini, D.~Ilcinkas, A.~Pelc, and N.~Santoro.
\newblock Remembering without memory: Tree exploration by asynchronous
  oblivious robots.
\newblock {\em Theoretical Computer Science}, 411(14--15): 1583--1598, 2010.

\bibitem{FlIPS11}
P.~Flocchini, D.~Ilcinkas, A.~Pelc, and N.~Santoro.
\newblock How many oblivious robots can explore a line.
\newblock {\em Information Processing Letters}, 111(20): 1027--1031, 2011.

\bibitem{FlIPS13}
P.~Flocchini, D.~Ilcinkas, A.~Pelc, and N.~Santoro.
\newblock Ring exploration by asynchronous oblivious robots.
\newblock {\em Algorithmica}, 65(3): 562--583, 2013.

\bibitem{FlPS12}
P. Flocchini, G. Prencipe, and N. Santoro.
{\em Distributed Computing by Oblivious Mobile Robots}.
Morgan \& Claypool, 2012. 

\bibitem{riordan}
E.~Gilbert and J.~Riordan.
\newblock Symmetry types of periodic sequences.
\newblock \emph{Illinois Journal of Mathematics}, 5(4): 657--665, 1961.

\bibitem{GuP13}
S.~Guilbault and A.~Pelc.
\newblock Gathering asynchronous oblivious agents with local vision in regular
  bipartite graphs.
\newblock  {\em Theoretical Computer Science}, 509: 86--96, 2013.
  

\bibitem{IzIKO10}
T.~Izumi, T.~Izumi, S.~Kamei, and F.~Ooshita.
\newblock Mobile robots gathering algorithm with local weak multiplicity in
  rings.
\newblock In {\em Proceedings of 17th International Colloquium on
  Structural Information and Communication Complexity (SIROCCO)}, 
  101--113, 2010.

\bibitem{johnson}
S.~Johnson.
\newblock Generation of permutations by adjacent transposition.
\newblock \emph{Mathematics of Computation}, 17: 282--285, 1963.



\bibitem{KaLOT12}
S.~Kamei, A.~Lamani, F.~Ooshita, and S.~Tixeuil.
\newblock Gathering an even number of robots in an odd ring without global multiplicity detection. 
\newblock In {\em Proceedings of 37th International Symposium on Mathematical 
Foundations of Computer Science  (MFCS)}, 542--553, 2012.

\bibitem{kant}
G.~Kant.
\newblock Drawing planar graphs using the canonical ordering.
\newblock \emph{Algorithmica}, 16(1): 4--32, 1996.

\bibitem{KlKN10}
R.~Klasing, A.~Kosowski, and A.~Navarra.
\newblock Taking advantage of symmetries: Gathering of many asynchronous
  oblivious robots on a ring.
\newblock {\em Theoretical Computer Science}, 411: 3235--3246, 2010.

\bibitem{KlMP08}
R.~Klasing, E.~Markou, and A.~Pelc.
\newblock Gathering asynchronous oblivious mobile robots in a ring.
\newblock {\em Theoretical Computer Science}, 390: 27--39, 2008.

\bibitem{KoN09}
A. Kosowski and A.~Navarra.
\newblock Graph decomposition for improving memoryless periodic exploration.
\newblock In {\em Proceedings of 34th International Symposium on Mathematical 
Foundations of Computer Science  (MFCS)}, 501--512, 2009.

\bibitem{LaGT10}
A.~Lamani, M.~Gradinariu Potop{-}Butucaru, and S.~Tixeuil.
\newblock Optimal deterministic ring exploration with oblivious asynchronous
  robots.
\newblock In {\em Proceedings of 17th  Int.\ Colloquium on Structural
  Information and Communication Complexity (SIROCCO)}, 183--196, 2010.
  

\bibitem{MiPST14}
L. Millet, M. Potop{-}Butucaru, N. Sznajder, and S. Tixeuil.
\newblock On the synthesis of mobile robots algorithms: The case of ring gathering. 
\newblock In {\em Proceedings of 16th International Symposium on  Stabilization, 
Safety, and Security of Distributed Systems (SSS)}, 237--251, 2014.

\bibitem{OoT15}
F. Ooshita and S. Tixeuil. 
\newblock On the self-stabilization of mobile oblivious robots in uniform rings. 
\newblock {\em Theoretical Computer Science}, 568: 84--96, 2015.

\end{thebibliography}

\end{document}